\numberwithin{equation}{section}
\numberwithin{figure}{section}
\theoremstyle{remark}
\newtheorem*{rem*}{\protect\remarkname}
\theoremstyle{plain}
\newtheorem{thm}{\protect\theoremname}
\theoremstyle{definition}
\newtheorem{defn}[thm]{\protect\definitionname}
\theoremstyle{plain}
\newtheorem{prop}[thm]{\protect\propositionname}
\theoremstyle{plain}
\newtheorem{cor}[thm]{\protect\corollaryname}
\theoremstyle{remark}
\newtheorem{rem}[thm]{\protect\remarkname}
\theoremstyle{definition}
\newtheorem{problem}[thm]{\protect\problemname}
\theoremstyle{definition}
\newtheorem{example}[thm]{\protect\examplename}
\theoremstyle{plain}
\newtheorem{lem}[thm]{\protect\lemmaname}
\providecommand{\corollaryname}{Corollary}
\providecommand{\definitionname}{Definition}
\providecommand{\examplename}{Example}
\providecommand{\lemmaname}{Lemma}
\providecommand{\problemname}{Problem}
\providecommand{\propositionname}{Proposition}
\providecommand{\remarkname}{Remark}
\providecommand{\theoremname}{Theorem}
\begin{document}
\title[Singularities of symmetric symbols on surfaces]{Singularities of the eigenvalue functions for first order symmetric
symbols on rank two vector bundles over surfaces}
\author{Carlos Valero Valdes\\
Departamento de Matematicas \\
Universidad de Guanajuato\\
Guanajuato, México}
\date{01 Oct 2019}
\begin{abstract}
For a rank two bundle $F$ over a surface $X$, we study the set of
singularities $\mathcal{\mathcal{M}}_{\sigma}$ of the eigenvalue
functions of symmetric symbols $\sigma$ associated to first order
differential operators on $F$. We prove that the existence of these
singularities follows from topological considerations. We define $\deg(\mathcal{M}_{\sigma})$
(the degree of $\mathcal{\mathcal{M}}_{\sigma}$) and show that it
can be used to ``count'' the number of directions at which special
optical phenomena occur. For the case when $F=TX$ we compute a formula
for $\deg(\mathcal{\mathcal{M}}_{\sigma})$ in terms of the Euler
characteristics of $X$ and $\mathcal{N}_{\sigma}$, where $\mathcal{N}_{\sigma}\subset X$
is a manifold with boundary associated to $\sigma$. 
\end{abstract}

\maketitle
\global\long\def\CC{\mathbb{C}}%

\global\long\def\RR{\mathbb{R}}%

\global\long\def\map{\rightarrow}%

\global\long\def\mult{\mathcal{M}}%

\global\long\def\EE{\mathcal{E}}%

\global\long\def\OO{\mathcal{O}}%

\global\long\def\FH{\mathcal{F}}%

\global\long\def\CH{\mathcal{H}}%

\global\long\def\tangent{T}%

\global\long\def\cotangent{\tangent^{*}}%

\global\long\def\conormal{\mathcal{C}}%

\global\long\def\CS{\mathcal{C}}%

\global\long\def\SS{\mathcal{S}}%

\global\long\def\KK{\mathcal{K}}%

\global\long\def\NN{\mathcal{N}}%

\global\long\def\sym#1{\hbox{S}^{2}#1}%

\global\long\def\symz#1{\hbox{S}_{0}^{2}#1}%

\global\long\def\proj#1{\hbox{P}#1}%

\global\long\def\SO{\hbox{SO}}%

\global\long\def\GL{\hbox{GL}}%

\global\long\def\U{\hbox{U}}%

\global\long\def\tr{\hbox{tr}}%

\global\long\def\ZZ{\mathbb{Z}}%

\global\long\def\der#1#2{\frac{\partial#1}{\partial#2}}%

\global\long\def\covder{\hbox{D}}%

\global\long\def\diff{d}%

\global\long\def\dero#1{\frac{\partial}{\partial#1}}%

\global\long\def\END{\hbox{End}}%

\global\long\def\LL{\mathcal{L}}%

\global\long\def\ind{\hbox{ind}}%

\global\long\def\and{\hbox{\,\,and\,\,}}%

\section{Introduction}

\begin{figure}[h]

\subfloat[Upon entering a bi-axial crystal along one of its optical axes, a
light beam refracts as a slanted cone and then emerges as a cylinder.]{

\includegraphics[scale=0.35]{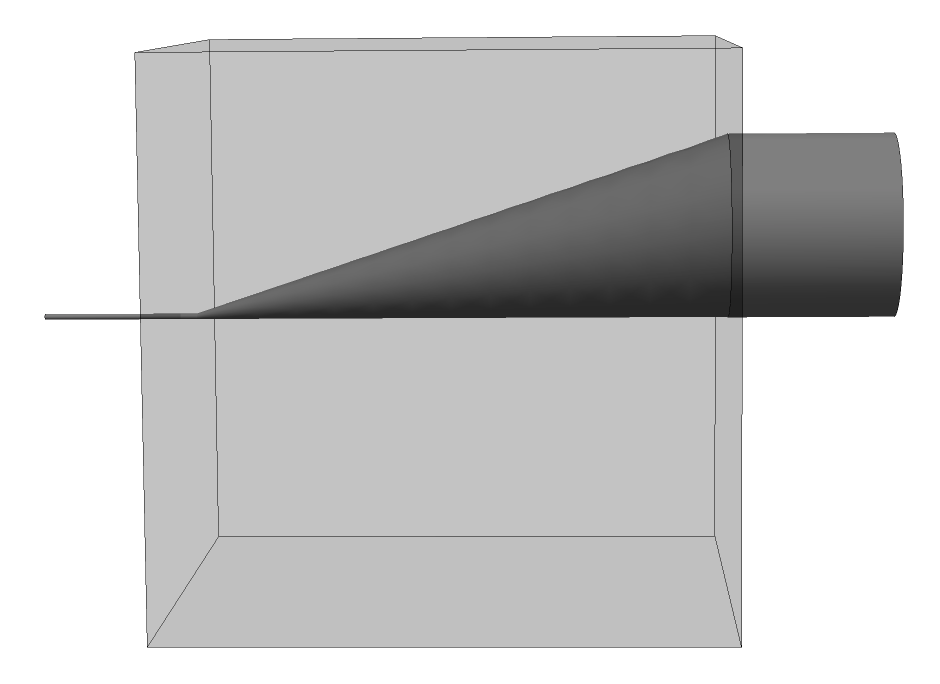}

}\subfloat[\label{fig:Fresnel-surface}Fresnel surface associated to a bi-axial
crystal. This surface has four conical singularities, which account
for the phenomenon of conical refraction.]{\includegraphics[scale=0.25]{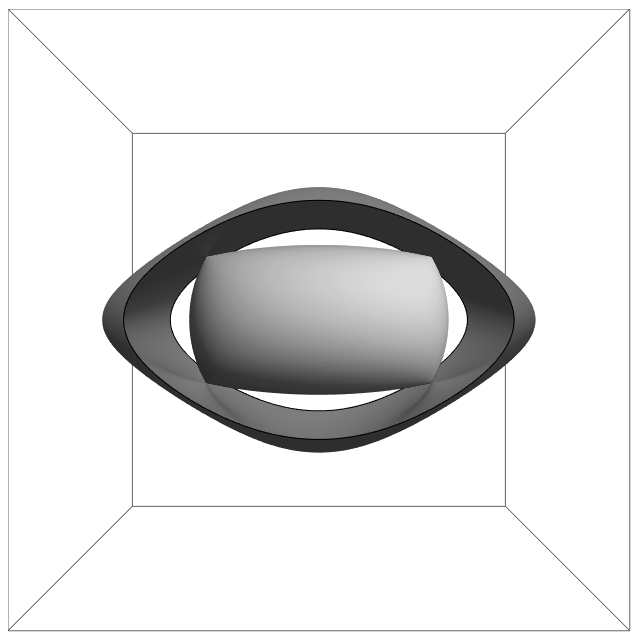}

}\caption{\emph{\label{fig:Conical-refraction}} Conical refraction}

\end{figure}
On the 22nd of October 1832, in the ``Third Supplement to an Essay
on the Theory of Systems of Rays'', William R. Hamilton predicted
the phenomenon of \emph{conical refraction} for bi-axial crystals
based on the existence of conical singularities in its \emph{Fresnel
surface} (see Figure \ref{fig:Conical-refraction}). The existence
of this phenomenon has been applied to diverse fields, like optical
trapping, free-space optical communications, polarization metrology,
super-resolution imaging, two-photon polymerization, and lasers (see
\cite{kn:CRApplications}).

The propagation of electromagnetic waves in a homogeneous and an-isotropic
medium can be modeled by Maxwell's equations (see \cite[pg. 678]{kn:born})
\begin{align}
\der Et-\epsilon^{-1}(\nabla\times B) & =0,\label{eq:MaxwellsA}\\
\der Bt+\nabla\times E & =0,\label{eq:MaxwellsB}
\end{align}
and
\begin{equation}
\nabla\cdot(\epsilon E)=\nabla\cdot B=0.\label{eq:MaxwellC}
\end{equation}
In the above formulas we have assumed that the speed of light is equal
to $1$ and that $\epsilon$ is a $3\times3$ positive definite symmetric
matrix (known as the \emph{dielectric tensor}). If for $E_{0},B_{0}\in\RR^{3}$
and $(\xi,\tau)\in(\RR^{3}\times\RR)^{*}$ we consider fields that
move along planar waves, i.e
\begin{align*}
E(x,t) & =E_{0}\exp(i(\xi\cdot x-\tau t)),\\
B(x,t) & =B_{0}\exp(i(\xi\cdot x-\tau t)),
\end{align*}
by substitution into equations \ref{eq:MaxwellsA} and \ref{eq:MaxwellsB}
we obtain
\begin{equation}
(\tau I-\sigma_{\epsilon}(\xi))\left(\begin{array}{c}
E_{0}\\
B_{0}
\end{array}\right)=0,\label{eq:MaxwellKernel}
\end{equation}
where $I$ is the $6\times6$ identity matrix, 
\[
\sigma_{\epsilon}(\xi)=\left(\begin{array}{cc}
0 & -\epsilon^{-1}P(\xi)\\
P(\xi) & 0
\end{array}\right)
\]
and
\[
P(\xi)=\left(\begin{array}{ccc}
0 & -\xi_{3} & \xi_{2}\\
\xi_{3} & 0 & -\xi_{1}\\
-\xi_{2} & \xi_{1} & 0
\end{array}\right).
\]
To find non-trivial solutions of Equation \ref{eq:MaxwellKernel}
we need to solve the\emph{ characteristic equation}
\begin{equation}
\det(\tau I-\sigma_{\epsilon}(\xi))=0.\label{eq:CharacteristicEquationMaxwell}
\end{equation}
The matrix $\sigma_{\epsilon}(\xi)$ has $0$ as an eigenvalue of
multiplicity $2$ and the other four eigenvalues are of the form $-\lambda_{\epsilon,1}(\xi),\lambda_{\epsilon,1}(\xi),-\lambda_{\epsilon,2}(\xi)$
and $\lambda_{\epsilon,2}(\xi)$, where the functions $\lambda_{\epsilon,1}$
and $\lambda_{\epsilon,2}$ are homogenous of degree one. The Fresnel
surface associated to $\epsilon$ is the set
\[
\FH_{\epsilon}=\FH_{\epsilon}^{1}\cup\FH_{\epsilon}^{2}
\]
where 
\[
\FH_{\epsilon}^{i}=\{\xi\in(\RR^{3})^{*}|\lambda_{\epsilon,i}(\text{\ensuremath{\xi)=1}}\}.
\]
The sets $\FH_{\text{\ensuremath{\epsilon}}}^{1}$ and $\FH_{\epsilon}^{2}$
correspond to the inner and outer sheets of the surface shown in Figure
\ref{fig:Fresnel-surface}. The existence of the four conical singularities
in $\FH_{\epsilon}$ follows from the fact that $\lambda_{\epsilon,1}$
and $\lambda_{\epsilon,2}$, when restricted to the sphere $S^{2}=\{\xi\in(\RR^{3})^{*}|<\xi,\xi>=1\}$,
are non-smooth at the \emph{multiplicity set}
\[
\mult_{\epsilon}=\{\xi\in S^{2}|\lambda_{\epsilon,1}(\xi)=\lambda_{\epsilon,2}(\xi)\}.
\]

\subsection{First order hyperbolic differential operators with constant coefficients}

We can study the problem discussed above for Maxwell's equations in
a more general setting. Consider the first order linear partial differential
equation 
\begin{equation}
H_{L}u=0\hbox{\,\,\,for\,\,\,}u:\RR^{n}\times\RR\rightarrow\RR^{k},\label{eq:LinearPDEOrder1ConstantCoeffs}
\end{equation}
where 
\begin{equation}
H_{L}=\dero tI+L\hbox{\,\,for\,\,}L=\sum_{i=1}^{n}A_{i}\dero{x_{i}}\hbox{\,\,for\,\,}A_{i}\in M_{k\times k}(\RR),\label{eq:GeneralFirstOrderHyperbolicOperator}
\end{equation}
and $I$ is the identity $k\times k$ matrix.
\begin{rem*}
Maxwell's equations \ref{eq:MaxwellsA} and \ref{eq:MaxwellsB} can
be written in the form \ref{eq:LinearPDEOrder1ConstantCoeffs} for
an operator $L$ with appropriate coefficients $A_{1},A_{2},A_{3}\in M_{6\times6}(\RR)$. 

If for $u_{0}$ in $\RR^{k}$ we substitute fields of the form $u=u_{0}\exp(\xi\cdot x-\tau t)$
into \ref{eq:LinearPDEOrder1ConstantCoeffs} we obtain the equation
\begin{equation}
(\tau I-\sigma_{L}(\xi))u_{0}=0\label{eq:PolarizationEqConstCoeffs}
\end{equation}
where 
\[
\sigma_{L}(\xi_{1},\ldots,\xi_{n})=\sum_{i=1}^{n}A_{i}\xi_{i}\hbox{\,\,\,for\,\,\,}\xi=(\xi_{1},\ldots,\xi_{n}).
\]
 To solve \ref{eq:PolarizationEqConstCoeffs} we need to solve the
characteristic equation $\det(\sigma(\xi)-\tau I)=0$. As in the case
of Maxwells equations, we are interested in the points $\xi\in S^{n-1}$
at which $\sigma_{L}(\xi)$ has multiple eigenvalues. This condition
has been studied by P. Lax in \cite{kn:lax}, F. John in \cite{kn:john}
and B.A. Khesin in \cite{kn:khesin}. In \cite{kn:FriedlandCrossing}
Friedland, Robbin and Sylvester elegantly reduced this problem to
Adams\textquoteright{} celebrated theorem on linearly independent
vector fields on the sphere. The problem is also related to the 1929
celebrated von Neumann\textendash Wigner theorem on the non-crossing
rule (see \cite{kn:NeumannCrossing}). 
\end{rem*}

\subsection{First order hyperbolic differential operators on vector bundles }

The above ideas can be generalized even further to the case of first
order hyperbolic linear partial differential equations on bundles.
More concretely, consider a Riemannian manifold $X$ with metric $<,>.$
The cotangent bundle $\cotangent X$ inherits a metric from that of
$\tangent X$ by letting the map $v\mapsto<v,\cdot>$ be an isometry
. Consider a vector bundle $F$ over $X$, a linear differential operator
$L:C^{\infty}(F)\rightarrow C^{\infty}(F)$ of order one, and the
equation
\begin{equation}
H_{L}u=0\hbox{\,\,where\,\,}H_{L}=\frac{\partial}{\partial t}+L\label{eq:HyperbolicEqManifolds}
\end{equation}
where $u:X\times\RR\rightarrow F$ is a time dependent section of
$F$. We are interested in high frequency solutions of \ref{eq:HyperbolicEqManifolds}.
More concretely, for a section $u_{0}:X\rightarrow F$ and a phase
function $\varphi:X\times\RR\rightarrow\RR$ we consider oscillatory
sections of the form
\[
u_{s}(x,t)=u_{0}e^{is\varphi(x,t)}.
\]
By substitution into equation \ref{eq:HyperbolicEqManifolds} and
letting $s\mapsto\infty$ we obtain (for the details of the derivation
this formula the reader is referred to books on asymptotic methods
like \cite{kn:maslov,kn:sternin,kn:guillemin} )
\begin{equation}
\left(\der{\varphi}t(x,t)I+\sigma(d_{x}\varphi(x,t))\right)u_{0}=0,\label{eq:PolarizationOnManifolds}
\end{equation}
where $I:F\rightarrow F$ is the identity morphism and $\sigma:\cotangent X\rightarrow\hbox{End}(F)$
is the \emph{principal symbol} of $L.$ Since we are assuming that
$L$ is a differential operator of degree one, for every $x\in X$
we have that $\sigma$ is a linear map from $\cotangent_{x}X$ to
$\hbox{End}(F_{x})$. A necessary condition for \ref{eq:PolarizationOnManifolds}
to hold is that $\varphi$ satisfies the non-linear partial differential
equation of first order, know as the characteristic equation, given
by
\begin{equation}
\det\left(\der{\varphi}t(x,t)I+\sigma(d_{x}\varphi(x,t))\right)=0.\label{eq:CharacteristicEquationManifolds}
\end{equation}
This equation can be interpreted as follows. For a particle $x:\RR\rightarrow X$
traveling on a wave front defined by the equation
\[
\varphi(x(t),t)=\hbox{constant}
\]
 we have that
\[
\der{\varphi}t(x,t)=-<\nabla_{x}\varphi(x,t),\dot{x}>,
\]
so formula \ref{eq:CharacteristicEquationManifolds} becomes
\[
\det\left(\left\langle n(x,t),\dot{x}\right\rangle I-\sigma\left(n^{\flat}(x,t)\right)\right)=0
\]
for 
\[
n(x,t)=\frac{\nabla_{x}\varphi(x,t)}{|\nabla_{x}\varphi(x,t)|}\hbox{\,\,and\,\,}n^{\flat}(x,t)=\frac{d_{x}\varphi(x,t)}{|d_{x}\varphi(x,t)|}.
\]
We conclude that if a particle moves in a wave front 
\[
Y_{t}=\{x\in X|\varphi(x,t)=\hbox{constant}\}
\]
then the component of its velocity normal to the wave front must be
an eigenvalue of $\sigma(n^{\flat}(x,t))$. Hence, for a unitary co-vector
$\xi\in T_{x}^{*}X$ the eigenvalues of $\sigma(\xi)$ are the allowable
normal velocities of particles moving along wave fronts whose tangent
spaces at $x$ is the kernel of $\xi$. If $\lambda_{\sigma,i}:\cotangent X\rightarrow\RR$
are the eigenvalues of $\sigma$ (ordered as $\lambda_{\sigma,1}\leq\text{\ensuremath{\cdots}}\leq\lambda_{\sigma,k}$)
then we are interested in the non-zero covectors $\xi\in\cotangent X$
at which $\text{\ensuremath{\lambda_{\sigma,i}(\xi)=\lambda_{\sigma,j}(\xi)}}$
for some $i<j$.

\subsection{Symmetric symbols and characteristic functions}

In this paper we will only be interested in the case where the symbol
$\sigma$ is a \emph{symmetric symbol}, i.e we will assume that for
a metric $<,>_{F}$ in $F$ we have that for all $x\in X$ and $\xi\in\cotangent_{x}X$
the following holds
\[
<\sigma(\xi)v,w>_{F}=<v,\sigma(\xi)w>_{F}\hbox{\,\,for\,all\,}v,w\in F_{x}.
\]
If we denote the bundle of symmetric endomorphisms of $F$ by $\sym F$,
then $\sigma$ is a morphism of the form $\sigma:\cotangent X\rightarrow\sym F$.
Symmetric symbols appear in diverse physical problems derived from
variational principles (see \cite[Ch. 8]{kn:arnold}). If $\lambda_{\sigma,1}(\xi),\ldots,\lambda_{\sigma,k}(\xi)$
are the eigenvalues of $\sigma(\xi)$ then the characteristic equation
\ref{eq:CharacteristicEquationManifolds} is equivalent to the \emph{Hamilton-Jacobi
equations}

\begin{equation}
\frac{\partial\varphi}{\partial t}(x,t)+\lambda_{\sigma,i}(d_{x}\varphi(x,t))=0\hbox{\,for\,\,}i=1,\ldots,k,\label{eq:Characteristic Equation Scalar}
\end{equation}
where $k$ is the rank of $F$. We will refer to the functions $\lambda_{\sigma,i}:\cotangent X\rightarrow\RR$
as the eigenvalue functions of the symbol $\sigma$. For hyperbolic
partial differential equations which correspond to generic non-homogeneous
media the existence of non-smooth points of the eigenvalue functions
account for a phenomena known as wave transformation, in which longitudinal
waves convert into transversal waves or vice-versa (see \cite[Ch. 8]{kn:arnold}
and \cite{kn:braam}). Hence, it is important to find methods for
detecting points at which the eigenvalue functions of a symbol become
non-smooth.

\subsection{Plan of the paper}

For the rest of the paper we will assume that $X$ is a Riemannian
compact, connected and oriented surface, and $F$ an $\SO(2)$ vector
bundle of rank $2$ over $X$. For a first order differential operator
acting on $F$ and having a symmetric symbol $\sigma:\cotangent X\rightarrow\sym F$,
we are interested in studying the multiplicity set \emph{
\[
\mult_{\sigma}=\{\xi\in S(\cotangent X)|\lambda_{\sigma,1}(\xi)=\lambda_{\sigma,2}(\xi)\}.
\]
}The reason for restricting our study of $\mult_{\sigma}$ to $X$
and $F$ as above is that in this case we can prove many interesting
properties of $\mult_{\sigma}$, which we hope will pave the way and
motivate the relevant questions for the more general cases. Our main
results can be summarized as follows:
\begin{itemize}
\item We will show that for symbols $\sigma$ in general position the multiplicity
set $\mult_{\sigma}$ is a $1$-dimensional sub-manifold of $S(\cotangent X)$
which can be characterized as the set of points at which the eigenvalue
functions $\lambda_{\sigma,1}$ and $\lambda_{\sigma,2}$ are non-smooth
(see Proposition \ref{pro:singularitiesequationforfresnel}).
\item We will give a topological obstruction for $\mult_{\sigma}$ to be
empty. More concretely, we will show that if $e(TX)$ and $e(F)$
are the Euler classes of $TX$ and $F$, then $e(\tangent X)-2e(F)\not=0$
in $H^{2}(X,\RR)$ implies that $\mult_{\sigma}\not=\emptyset$ (Theorem
\ref{pro:SingCharFunctions}). In particular, if $F=\tangent X$ and
the genus of $X$ is not $0$ then $\mult_{\sigma}$ must always be
non-empty ( \ref{pro:SingCharFunctions}).
\item We will show that for symbols $\sigma$ in general position the set
$\mult_{\sigma}$ is the intersection of $S(\cotangent X)$ with a
real line bundle $\KK_{\sigma}\subset\cotangent X$ over one dimensional
sub-manifold of $\SS_{\sigma}$ of $X$ (Theorem \ref{thm:MultiplicityAsKernelBundle}
and Corollary \ref{cor:SingCharFunctions}). This will allow us to
define the integer $\deg(\mult_{\sigma})\in\ZZ$ as the number of
times that $\KK_{\sigma}$ turns as we move along $\SS_{\sigma}$.
We will provide a physical interpretation of this number to motivate
its relevance.
\item We will prove that when $F=\tangent X$ a symbol $\sigma$ has two
associated sections $v:X\rightarrow F$ and $w:X\rightarrow F\otimes_{\CC}F\otimes_{\CC}F$
(Theorem \ref{thm:ISOMORPHISM} and Corollary \ref{cor:SymbolsFromComplex})
such that
\[
\SS_{\sigma}=\{x\in X||w(x)/v(x)|=1\}
\]
and
\[
\mult_{\sigma}=\bigcup_{x\in\SS_{\sigma}}\left\{ z^{\flat}\in\cotangent_{x}X|z\in\tangent_{x}X\hbox{\,\,and\,\,}z^{2}=\frac{w(x)}{v(x)}\right\} .
\]
From this result we will obtain the following formula (see Theorem
\ref{thm:MainIndexTheorem})
\[
\deg(\mult_{\sigma})=3\chi(X)-2\chi(\NN_{\sigma}),
\]
where 
\[
\NN_{\sigma}=\{x\in X||w(x)/v(x)|\leq1\},
\]
and $\chi(X)$ and $\chi(\NN_{\sigma})$ are the Euler characteristics
of $X$ and $\NN_{\sigma}$, respectively.
\item We conclude the paper with the construction of some explicit examples
of symbols on the plane (the only non-compact case considered) and
the sphere, and use the formula in Theorem \ref{thm:MainIndexTheorem}
to compute $\deg(\mult_{\sigma})$ for these examples.
\end{itemize}

\section{Symbols and Their Multiplicities}

\subsection{Preliminaries}

From now on and for rest of the paper we will assume that $F$ is
Riemannian orientable real vector bundle of rank $2$, over an oriented
Riemannian surface $X$ that is compact and connected. To distinguish
the metrics in $TX$ and $F$ we will denote them by $<,>$ and $<,>_{F}$,
respectively. The metric in $TX$ induces isomorphism maps $\flat:\tangent X\rightarrow\cotangent X$
and $\sharp:\cotangent X\rightarrow TX$, such that for $v,w\in\tangent_{x}X$
and $\xi\in\cotangent_{x}X$ we have that
\[
v^{\flat}(w)=<v,w>\hbox{\,\,and\,\,\,}\xi(w)=<\xi^{\sharp},w>.
\]
The metric in $\tangent X$ induces a metric $\cotangent X$ by letting
\[
<\xi,\eta>=<\xi^{\sharp},\eta^{\sharp}>.
\]
Observe that we are using the same notation for the metrics in $\cotangent X$
and $\tangent X$. 

Let $E_{1}$ and $E_{2}$ be vector bundles over $X$ with projection
maps $\pi_{1}$ and $\pi_{2}$. A section $s:X\rightarrow\hbox{Hom}(E_{1},E_{2})$
can be seen as a bundle morphism $\sigma_{s}:E_{1}\rightarrow E_{2}$
given by
\[
\sigma_{s}(v)=s(x)v\hbox{\,\,for\,\,}x=\pi_{1}(v).
\]
Similarly, a bundle morphism $\sigma:E_{1}\rightarrow E_{2}$ can
be seen as a sections $s_{\sigma}:X\rightarrow\hbox{Hom}(E_{1},E_{2})$
given by
\[
s_{\sigma}(x)v=\sigma(v)\hbox{\,\,for\,\,}v\in\pi_{1}^{-1}(x).
\]

\subsection{Symmetric symbols}

Motivated by the discussion in the introduction, we introduce our
basic objects of study. 
\begin{defn}
A\emph{ symmetric symbol} on the bundle $F\rightarrow X$ is a smooth
bundle morphism $\sigma:\cotangent X\rightarrow\sym F$.
\end{defn}

From now on we will refer to a symmetric symbol simply as a symbol;
since we will not be working with non-symmetric ones. Given that a
symbol $\sigma$ is a morphism of vector bundles, it is completely
determined by the values it take on the sphere bundle
\[
S(\cotangent X)=\{\xi\in\cotangent X|<\xi,\xi>=1\}.
\]
The bundle $\sym F$ has a natural Riemannian metric given by
\[
g(A,B)=\frac{1}{2}\tr(A\circ B)\hbox{\,\,for\,all\,\,}x\in X\hbox{\,and\,}A,B\in\sym{F_{x}}.
\]
This metric induces a norm function on $\sym F$; which for an element
$A\in\sym F$ we will denote by $||A||$. If we let
\[
\sigma_{0}=\sigma-\frac{1}{2}\tr(\sigma)I
\]
be the traceless part of $\sigma$,  the eigenvalue functions $\lambda_{\sigma,1},\lambda_{\sigma,2}:\cotangent X\rightarrow\RR$
of $\sigma$ are given by
\begin{eqnarray}
\lambda_{\sigma,1}(\xi) & = & \frac{1}{2}\tr(\sigma(\xi))-||\sigma_{0}(\xi)||,\label{eq:LamdasFormula}\\
\lambda_{\sigma,2}(\xi) & = & \frac{1}{2}\tr(\sigma(\xi))+||\sigma_{0}(\xi)||.\label{eq:LambdasFormula2}
\end{eqnarray}
\begin{figure}[h]
\subfloat[]{\includegraphics[scale=0.45]{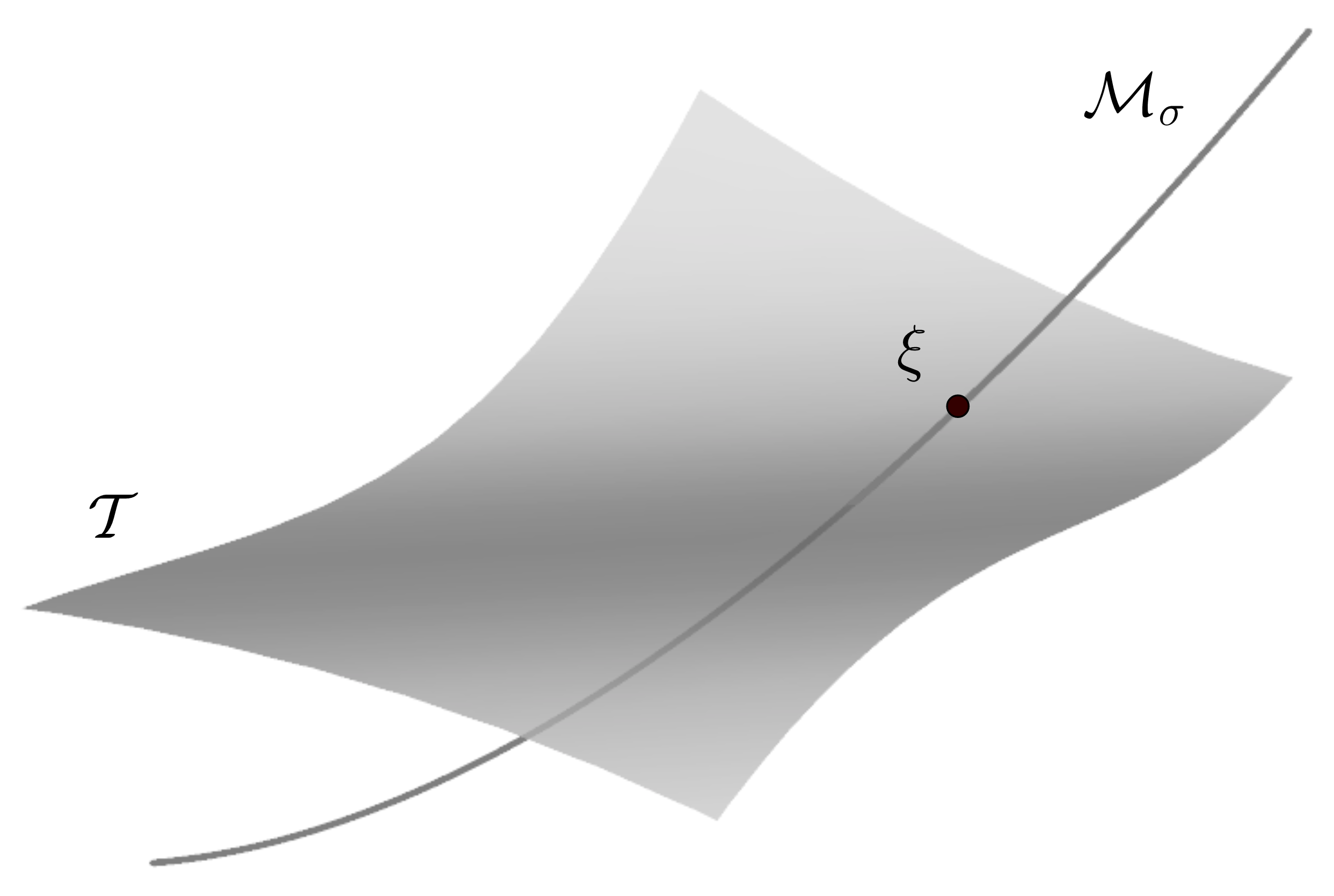}}\subfloat[]{\includegraphics[scale=0.4]{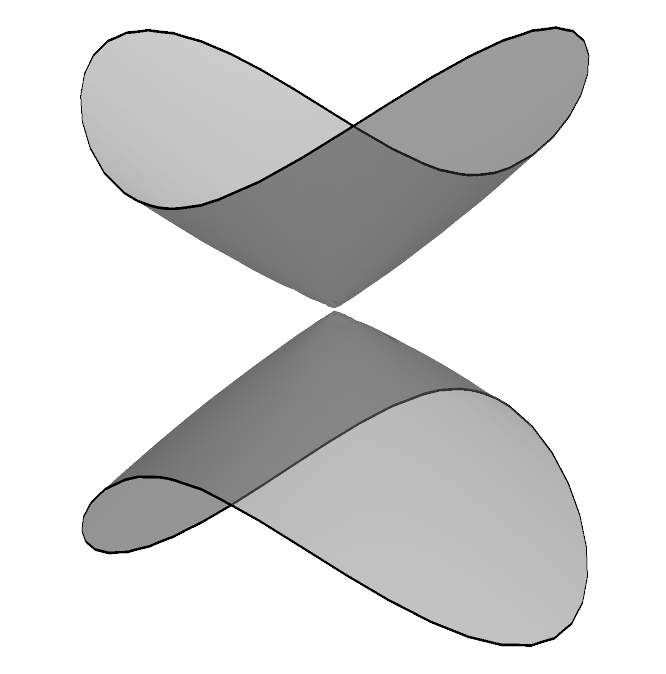}}\caption{\label{fig:Conical-singularities-of}Conical singularities of the
eigenvalue functions $\lambda_{\sigma,1}$ and $\lambda_{\sigma,2}$
of a symbol $\sigma$. On the left we show the multiplicity set $\protect\mult_{\sigma}$
and a transversal sub-manifold $\mathcal{T}$ to it at $\xi\in\protect\mult_{\sigma}$.
On the right we show the graphs near $\xi$ of $\lambda_{\sigma,1}|\mathcal{T}$
(lower cone) and $\lambda_{\sigma,2}|\mathcal{T}$ (upper cone).}
\end{figure}

\begin{defn}
The multiplicity set of a symbol $\sigma$ is the set
\[
\mult_{\sigma}=\{\xi\in S(\cotangent X)|\lambda_{\sigma,1}(\xi)=\lambda_{\sigma,2}(\xi)\}.
\]
\end{defn}

From formulas \ref{eq:LamdasFormula} and \ref{eq:LambdasFormula2}
we have that
\[
\mult_{\sigma}=\{\xi\in S(\cotangent X)|\sigma_{0}(\xi)=0\}.
\]
Let $\symz F$ stand for the bundle of traceless elements in $\sym F$.
If $\sigma_{0}|S(\cotangent X)$ is transversal to the 0-section of
$\symz F$ then $\mult_{\sigma}$ is 1-dimensional sub-manifold of
$S(\cotangent X)$. The following result shows that, for a symbol
$\sigma$ in general position, the set $\mult_{\sigma}$ can be described
as the set of elements in $S(\cotangent X)$ at which $\lambda_{\sigma,1}|S(\cotangent X)$
and $\lambda_{\sigma,2}|S(\cotangent X)$ are non-smooth (see Figure
\ref{fig:Conical-singularities-of}).
\begin{prop}
\label{pro:singularitiesequationforfresnel}Consider a symbol $\sigma$
and let $\sigma_{0}$ be its traceless part. If $\sigma_{0}|S(\cotangent X)$
is transversal to the zero section of $\symz F$ then 
\[
\mult_{\sigma}=\hbox{sing}(\lambda_{\sigma,1}|S(\cotangent X))=\hbox{sing}(\lambda_{\sigma,2}|S(\cotangent X)),
\]
where for a function $f:S(\cotangent X)\rightarrow\RR$ we have defined
\[
\hbox{sing}(f)=\{\xi\in S(\cotangent X)|f\hbox{\,\,is\,non-smooth\,at\,\,}\xi\}.
\]
\end{prop}

\begin{proof}
By formulas \ref{eq:LamdasFormula} and \ref{eq:LambdasFormula2}
it is enough to show that the map $f:S(\cotangent X)\rightarrow\RR$
given by
\[
f(\xi)=||\sigma_{0}(\xi)||
\]
is non-smooth at $\mult_{\sigma}$ and only on this set. The function
$f$ is smooth in $S(\cotangent X)-\mult_{\sigma}$ since $\sigma_{0}$
is smooth and the norm function is smooth away from the $0$-section
of $\symz F$. We will now prove that $f$ is non-smooth at $\mult_{\sigma}$.
The proof is local: consider an open set $U\subset X$ where we can
find ortho-normal sections $v,w:U\rightarrow F$. We can then write
\[
f(\xi)=|(a(\xi),b(\xi))|\text{{\,\,for\,\,}\ensuremath{\xi\in S(\cotangent X|U)}},
\]
where
\begin{align*}
a(\xi) & =<\sigma_{0}(\xi)v(\xi),v(\xi)>_{F},\\
b(\xi) & =<\sigma_{0}(\xi)v(\xi),w(\xi)>_{F},
\end{align*}
and 
\[
|(a,b)|=\sqrt{a^{2}+b^{2}}.
\]
If we define $\mult_{\sigma}|U=\mult_{\sigma}\cap S(\cotangent X|U$)
then 
\[
\mult_{\sigma}|U=\{\xi\in S(\cotangent X|U)|(a(\xi),b(\xi))=(0,0)\}.
\]
From the transversality assumptions we have that the linear functionals
$da(\xi)$ and $db(\xi)$ are linearly independent, and 
\[
T_{\xi}\mult_{\sigma}=\ker(da(\xi))\cap\ker(db(\xi))\text{{\,\,for\,\,\,}}\xi\in\mult_{\sigma}|U.
\]
For $\xi\in\mult_{\sigma}$ consider any vector $\eta\in T_{\xi}S(\cotangent X|U)$
not in $\tangent_{\xi}\mult_{\sigma}$ and a curve $\alpha=\alpha(t)$
with $\alpha(0)=\xi$ and $\dot{\alpha}(0)=\eta$. We have that
\[
\frac{f(\alpha(t))-f(\alpha(0))}{t}=\begin{cases}
\left|\frac{(a(\alpha(t)),b(\alpha(t))}{t}\right| & t>0,\\
-\left|\frac{(a(\alpha(t)),b(\alpha(t))}{t}\right| & t<0.
\end{cases}
\]
Hence
\[
\lim_{t\mapsto0^{+}}\frac{f(\alpha(t))-f(\alpha(0))}{t}=|(da(\xi)\eta,db(\xi)\eta)|
\]
and
\[
\lim_{t\mapsto0^{-}}\frac{f(\alpha(t))-f(\alpha(0))}{t}=-|(da(\xi)\eta,db(\xi)\eta)|.
\]
Since $\eta\not\in T_{\xi}\mult_{\sigma}$ we have that $|(da(\xi)\eta,db(\xi)\eta)|\not=0$
and we conclude that $f$ is not differentiable at $\xi$.
\end{proof}

\subsection{Obstruction theory for the multiplicity set}

We will now find topological obstructions for the set $\mult_{\sigma}$
to be empty.
\begin{thm}
\label{pro:SingCharFunctions}Consider a symbol $\sigma:\cotangent X\rightarrow\sym F$
and let $e(F),e(\tangent X)\in H^{2}(X,\RR)$ denote the Euler classes
of $F$ and $\tangent X$, respectively. If $e(\tangent X)-2e(F)\not=0$
in $H^{2}(X,\RR)$ then $\mult_{\sigma}\not=\emptyset$.
\end{thm}

\begin{proof}
Consider a rotation matrix 
\[
R_{\theta}=\left(\begin{array}{cc}
\cos(\theta) & -\sin(\theta)\\
\sin(\theta) & \cos(\theta)
\end{array}\right)\in\SO(2)
\]
and
\[
A=\left(\begin{array}{cc}
a & b\\
b & -a
\end{array}\right)\in\symz{\RR^{2}}.
\]
It easy to see that 
\[
R_{\theta}AR_{\theta}^{T}=\left(\begin{array}{cc}
p & q\\
q & -p
\end{array}\right),
\]
where
\[
\left(\begin{array}{c}
p\\
q
\end{array}\right)=\left(\begin{array}{cc}
\cos(2\theta) & -\sin(2\theta)\\
\sin(2\theta) & \cos(2\theta)
\end{array}\right)\left(\begin{array}{c}
a\\
b
\end{array}\right).
\]
Since the transition functions of the bundle $\symz F$ are the same
as those of the bundle $F\otimes_{\CC}F$, these bundles must be isomorphic.
If $\mult_{\sigma}$ were empty then we would have that $\sigma_{0}(\xi)\not=0$
for all $\xi\in S(\cotangent X)$, which means that $\sigma_{0}$
is an isomorphism between $\cotangent X$ and $\symz F$. We conclude
that $\tangent X$ and $F\otimes_{\CC}F$ must be isomorphic and hence
\[
e(\tangent X)=e(F\otimes_{\CC}F)=2e(F),
\]
which contradicts the hypothesis of the theorem. Hence $\mult_{\sigma}\not=\emptyset$.
\end{proof}
\begin{cor}
\label{cor:SingCharFunctions}If the genus of $X$ is different from
one then the multiplicity set of any symbol $\sigma:\cotangent X\rightarrow\sym{(\tangent X)}$
must be non-empty.
\end{cor}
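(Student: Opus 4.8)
The plan is to apply Proposition \ref{pro:SingCharFunctions} directly, so the entire task reduces to verifying that its hypothesis $e(\tangent X)-2e(F)\neq 0$ holds when $F=\tangent X$ and the genus of $X$ is not one. With $F=\tangent X$, the condition becomes $e(\tangent X)-2e(\tangent X)=-e(\tangent X)\neq 0$, so I simply need to show that the Euler class $e(\tangent X)\in H^2(X,\RR)$ is nonzero precisely when the genus differs from one.

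First I would recall the Gauss--Bonnet / Poincar\'e--Hopf identification: for a compact oriented surface $X$, the Euler class of $\tangent X$ integrates over the fundamental class to the Euler characteristic, that is, $\langle e(\tangent X),[X]\rangle = \chi(X)$. Since $X$ is connected, compact and oriented of genus $g$, we have $\chi(X)=2-2g$. Therefore $e(\tangent X)=0$ in $H^2(X,\RR)\cong\RR$ if and only if $\chi(X)=0$, i.e. $g=1$.

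Next I would assemble the argument: assume $g\neq 1$, so that $\chi(X)=2-2g\neq 0$ and hence $e(\tangent X)\neq 0$ in $H^2(X,\RR)$. Taking $F=\tangent X$ in Proposition \ref{pro:SingCharFunctions}, the relevant obstruction class is
\[
e(\tangent X)-2e(F)=e(\tangent X)-2e(\tangent X)=-e(\tangent X)\neq 0,
\]
so the hypothesis of the Proposition is satisfied. The Proposition then yields $\mult_{\sigma}\neq\emptyset$ for every symbol $\sigma$ on $\tangent X$, which is exactly the assertion of the Corollary.

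There is no serious obstacle here; the Corollary is a routine specialization of the Proposition once $F$ is set equal to $\tangent X$. The only point requiring a moment's care is the passage from the genus to the nonvanishing of the real Euler class, which rests on the fact that $H^2(X,\RR)$ is one-dimensional (as $X$ is a connected closed oriented surface) so that a class is nonzero exactly when its integral over $[X]$ is nonzero; this integral is $\chi(X)=2-2g$, which vanishes only at $g=1$.
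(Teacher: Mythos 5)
Your proof is correct and follows essentially the same route as the paper: specialize Proposition \ref{pro:SingCharFunctions} to $F=\tangent X$, reduce the obstruction class to $-e(\tangent X)$, and detect its nonvanishing by integrating over $X$, which gives $2(g(X)-1)\neq 0$ when $g(X)\neq 1$. Your write-up is in fact slightly more careful than the paper's, since you make explicit that $H^{2}(X,\RR)\cong\RR$ is what lets the integral detect nonvanishing of the class.
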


\begin{proof}
This follows from the previous Proposition and the fact that 
\[
\int_{X}(e(\tangent X)-2e(TX))=2(g-1)\not=0,
\]
where $g$ is the genus of $X$.
\end{proof}

\subsection{The singular set and the kernel bundle}

In this section we will see that $\mult_{\sigma}$ (when non-empty)
is generically the intersection of $S(\cotangent X)$ with a line
bundle $\KK_{\sigma}\subset\cotangent X$ defined over a one dimensional
sub-manifold $\SS_{\sigma}$ of $X$. Recall that a symbol $\sigma:\cotangent X\rightarrow\sym F$
can be seen as a section $s_{\sigma}$ of $\hbox{Hom}(\cotangent X,\sym F)$. 
\begin{defn}
The \emph{singular set} $\SS_{\sigma}$ and the \emph{kernel bundle}
$\KK_{\sigma}$ of a symbol $\sigma$ are the sets defined by
\begin{eqnarray*}
\SS_{\sigma} & = & \left\{ x\in X|\dim\left(\ker(s_{\sigma_{0}}(x))\right)>0\right\} ,\\
\KK_{\sigma} & = & \bigcup_{x\in S_{\sigma}}\ker(s_{\sigma_{0}}(x)).
\end{eqnarray*}
\begin{figure}[h]
\includegraphics[scale=0.4]{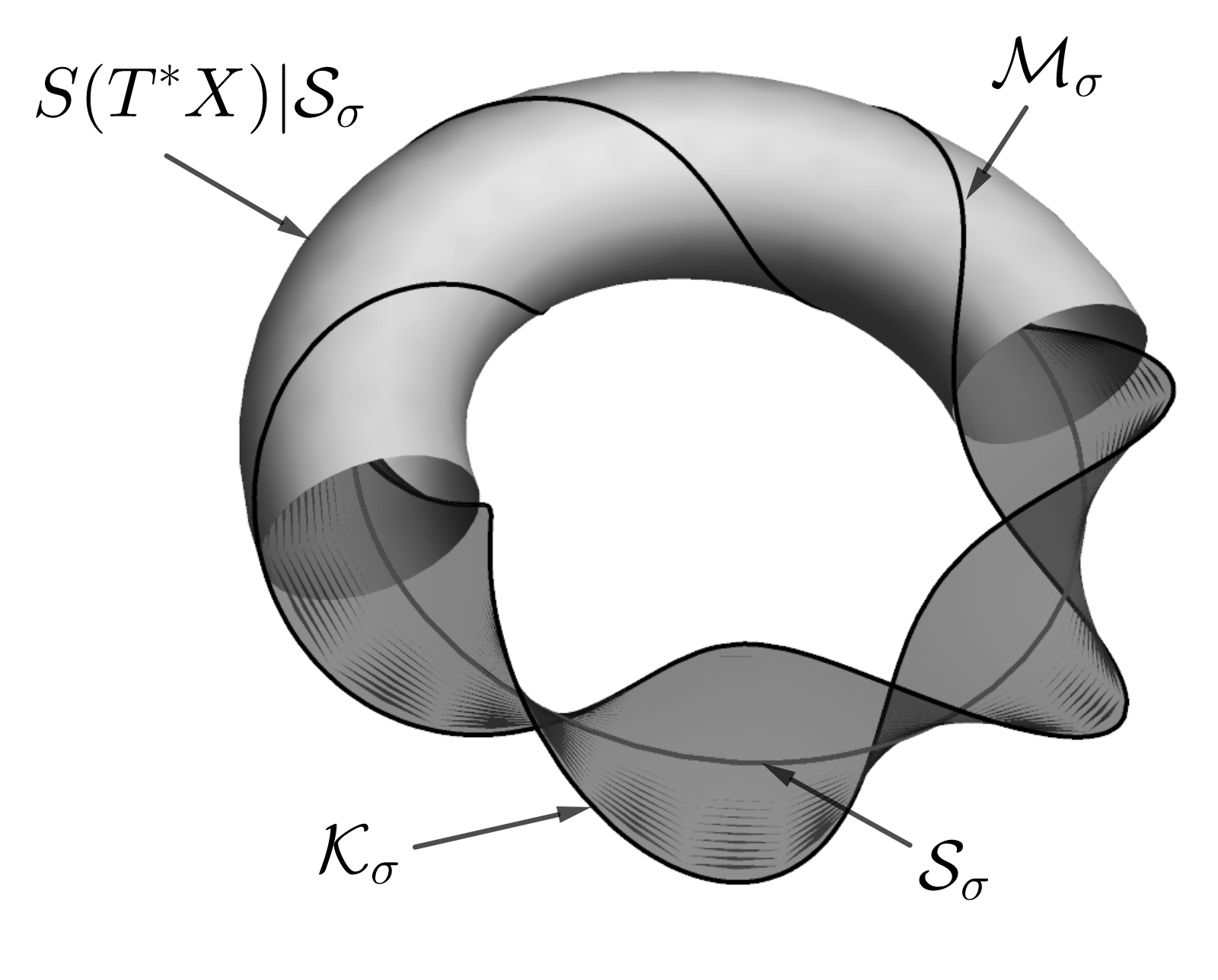}\caption{\label{fig:The-multiplicity-set}The multiplicity set $\protect\mult_{\sigma}$
as the intersection of the kernel bundle $\protect\KK_{\sigma}$ with
$S(\protect\cotangent X|\protect\SS_{\sigma})$.}
\end{figure}
The sets $\KK_{\sigma}$ and $\mult_{\sigma}$ are related as follows
(see Figure \ref{fig:The-multiplicity-set})
\[
\mult_{\sigma}=S(\cotangent X)\cap\KK_{\sigma}\hbox{\,\,and\,\,}\KK_{\sigma}=\RR\cdot\mult_{\sigma},
\]
where the action of $\lambda\in\RR$ on an element $\xi\in\mult_{\sigma}$
is that of scalar multiplication.
\end{defn}

\begin{rem}
The sets $\SS_{\sigma},\KK_{\sigma}$ and $\mult_{\sigma}$ only depend
on the traceless part of $\sigma$. Hence, when studying these spaces
we can assume that $\sigma$ is a traceless symbol.
\end{rem}

\begin{thm}
\label{thm:MultiplicityAsKernelBundle} For a symbol $\sigma$ in
general position with $\SS_{\sigma}\not=\emptyset$ we have that
\begin{enumerate}
\item The the singular set $\SS_{\sigma}$ is a smooth 1-dimensional sub-manifold
of $X$.
\item The kernel bundle $\KK_{\sigma}$ is a smooth rank one real vector
bundle over $\SS_{\sigma}$.
\item The symbol $\sigma_{0}:S(\cotangent X)\rightarrow\symz F$ is transversal
to the zero section of $\symz F$.
\end{enumerate}
\end{thm}

\begin{proof}
see section \ref{sec:Proof-Mult-As-Kernel}
\end{proof}
\begin{rem}
The importance of point 3 in the above Theorem is that for symbols
$\sigma$ in general position, the multiplicity set $\mult_{\sigma}$
can be characterized as the set where the eigenvalue functions $\lambda_{\sigma,1}|S(\cotangent X)$
and $\lambda_{\sigma,2}|S(\cotangent X)$ are non-smooth (see Proposition
\ref{pro:singularitiesequationforfresnel}).
\end{rem}

We will now show that we can write
\[
\KK_{\sigma}=L_{\sigma}|\SS_{\sigma},
\]
where $L_{\sigma}$ is line sub-bundle of $\cotangent X|(X-C_{\sigma})$
and $C_{\sigma}$ consists of a finite number of points. Recall that
we have a metric $g$ in $\symz F$ given by $g(A,B)=\tr(A\circ B)/2.$
Consider a traceless symbol $\sigma$ and let $\sigma^{*}:\symz{F\rightarrow\cotangent X}$
be its adjoint. We define the section $G_{\sigma}:X\rightarrow\sym{(\cotangent X})$
by $G_{\sigma}=s_{\sigma^{*}\circ\sigma},$ so that
\[
<G_{\sigma}(x)\xi,\eta>=g(\sigma(\xi),\sigma(\eta)).
\]
The eigenvalues of $G_{\sigma}$ define functions $\kappa_{\sigma,1},\kappa_{\sigma,2}:X\rightarrow\RR$
with corresponding orthogonal eigen-line subfields $L_{\sigma,1}$
and $L_{\sigma,2}$ of $\cotangent X|(X-C_{\sigma})$, where
\[
C_{\sigma}=\{x\in X|\kappa_{\sigma,1}(x)=\kappa_{\sigma,1}(x)\}.
\]
For symbols in general position $C_{\sigma}$ is a finite set. The
functions $\kappa_{\sigma,1}$ and $\kappa_{\sigma_{2}}$ are non-negative
and we can assume that $\kappa_{\sigma,1}\leq\kappa_{\sigma,2}$.
We can then write
\begin{equation}
0\leq\kappa_{\sigma,1}\leq\kappa_{\sigma,2}.\label{eq:KappaInequalities}
\end{equation}
By constructions we have
\[
||\sigma(\xi)||^{2}=<G_{\sigma}(x)\xi,\xi>
\]
so that for $x\in X-C_{\sigma}$ and $\xi\in S(\cotangent_{x}X)$
we have that
\begin{equation}
||\sigma(\xi)||^{2}=\kappa_{\sigma,1}(x)\cos^{2}(\theta(x))+\kappa_{\sigma,2}(x)\sin^{2}(\theta(x)),\label{eq:NormSigma0FromEigenvalues}
\end{equation}
where $\theta(x)$ is the angle that $\xi$ makes with $L_{\sigma,1}(x)$.
The above formula still make sense for $x\in C_{\sigma}$ since in
this case, independently of how we define $\theta(x)$, we have that
$||\sigma(\xi)||^{2}=\kappa_{\sigma,1}(x)=\kappa_{\sigma,2}(x)$. 
\begin{prop}
\label{prop:SsigmaFromG} For a traceless symbol $\sigma$ in general
position we have that $\kappa_{\sigma,2}>0$ and 
\begin{align*}
\SS_{\sigma} & =\{x\in X|\kappa_{\sigma,1}(x)=0\},\\
\KK_{\sigma} & =L_{\sigma,1}|\SS_{\sigma}.
\end{align*}
\end{prop}

\begin{proof}
If $x\in X$ is such that $\kappa_{2}(x)=0$ from \ref{eq:KappaInequalities}
and \ref{eq:NormSigma0FromEigenvalues} we get $\KK_{\sigma}(x)=\cotangent_{x}X$,
which contradicts Theorem \ref{thm:MultiplicityAsKernelBundle}. We
conclude that $k_{\sigma,2}(x)>0$ for all $x\in X$. Hence
\begin{align*}
\SS_{\sigma} & =\{x\in X|\kappa_{\sigma,1}(x)=0\},\\
\KK_{\sigma} & =\RR\cdot\{\xi\in S(\cotangent X)|x\in\SS_{\sigma}\hbox{\,\,and\,\,}\theta(x)\in\{0,\pi\}\}=L_{\sigma,1}(x).
\end{align*}
\end{proof}

\subsection{The degree of the multiplicity set}

\begin{figure}[h]
\subfloat[$\deg(\protect\mult_{\sigma}|C)=0$]{\includegraphics[scale=0.28]{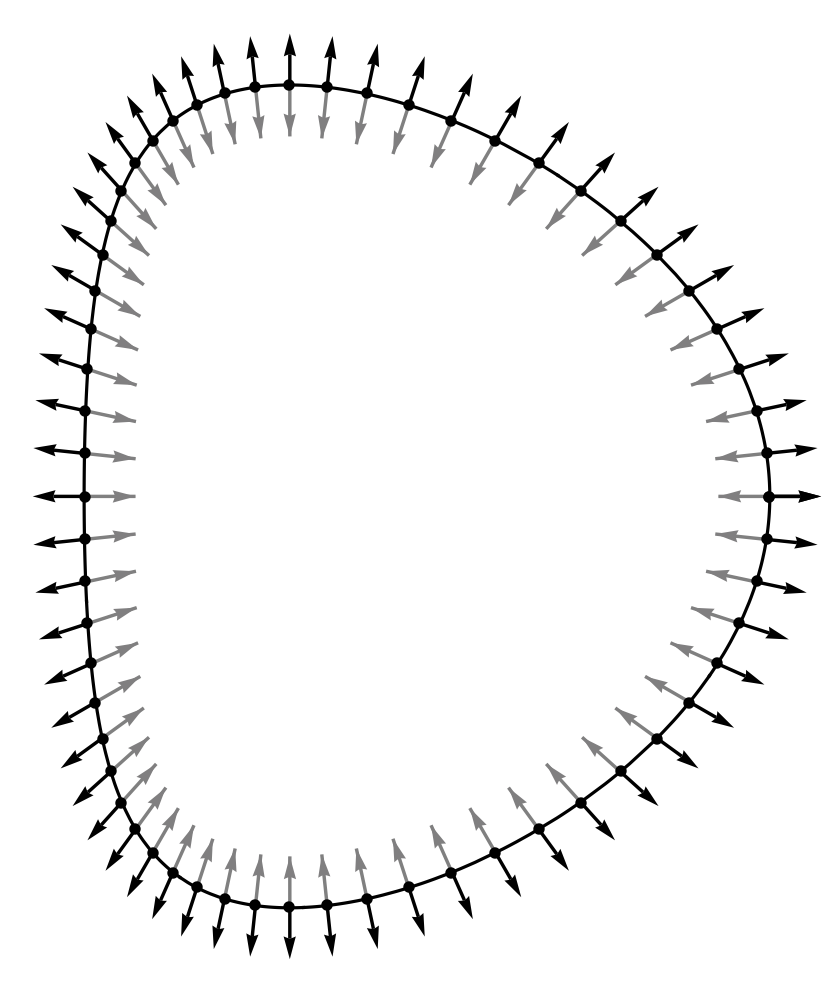}}\subfloat[$\deg(\protect\mult_{\sigma}|C)=1$]{\includegraphics[scale=0.28]{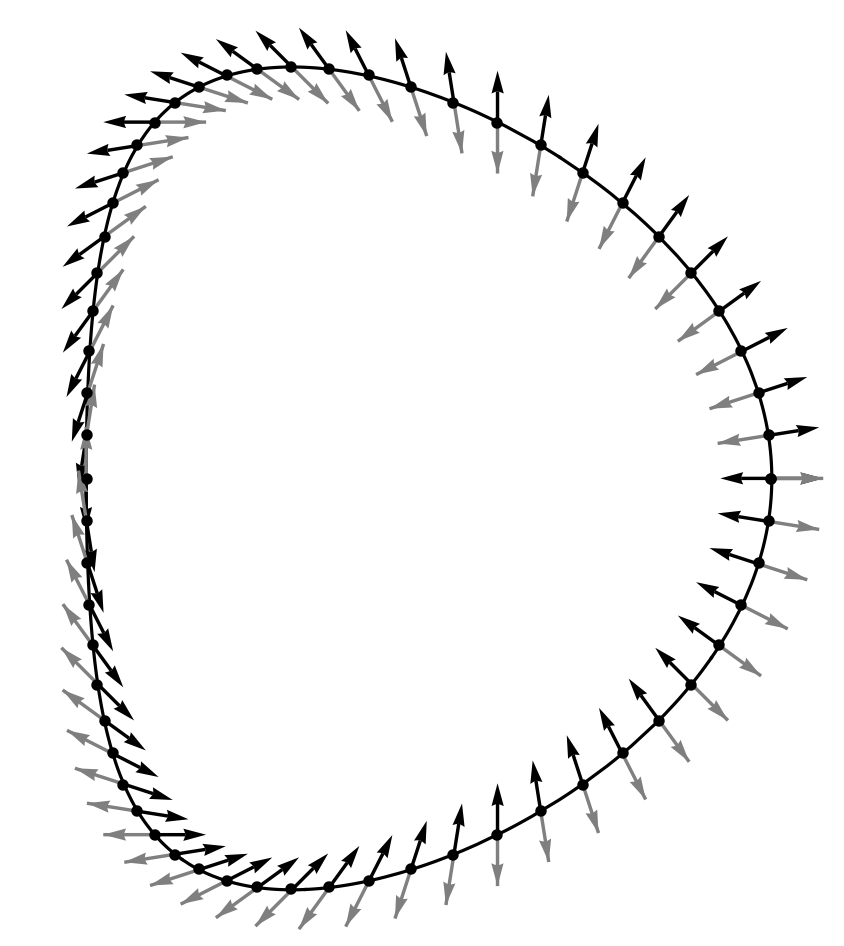}}\subfloat[$\deg(\protect\mult_{\sigma}|C)=-5$]{\includegraphics[scale=0.28]{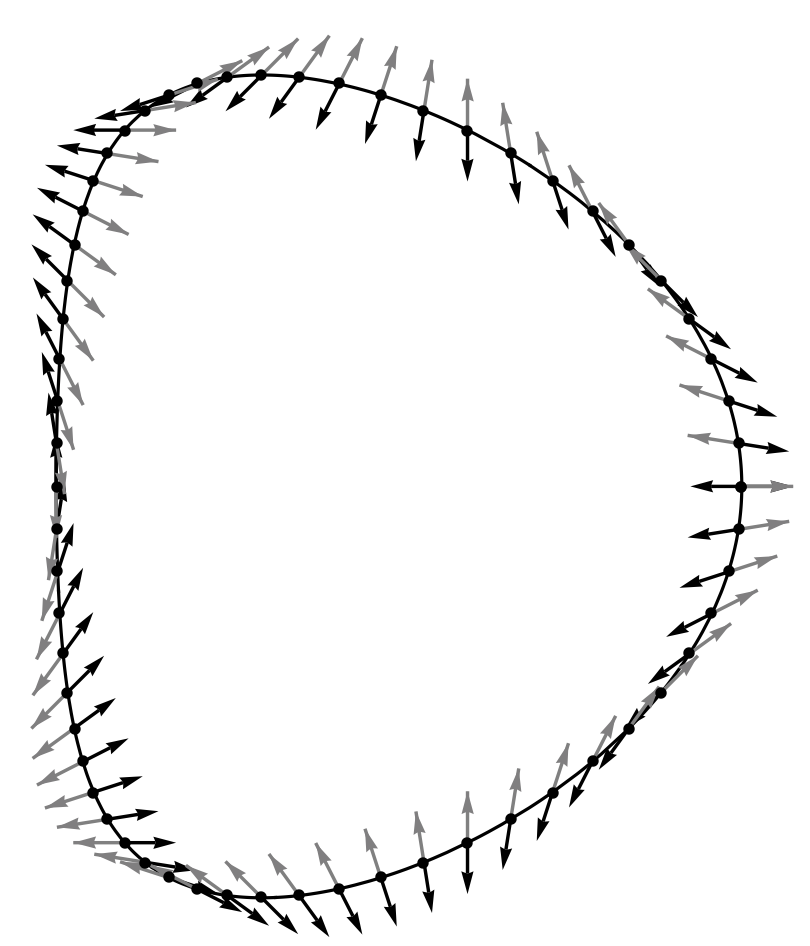}}\caption{\label{fig:The-degree-K}The degree of $\protect\mult_{\sigma}$ over
a connected component $C$ of $\protect\SS_{\sigma}$. The curve $C$
has been oriented counter-clockwise.}
\end{figure}

Consider a symbol $\sigma$ in general position with non-empty singular
set. The connected components of $\SS_{\sigma}$ are $1$-dimensional
sub-manifolds of $X$; each of them homeomorphic to a circle. An orientation
of $\SS_{\sigma}$ induces and orientation in $\mult_{\sigma}$ that
makes the projection map $\xi_{x}\mapsto x$ orientation preserving.
The degree of $\mult_{\sigma}$ over a connected component $C$ of
$\SS_{\sigma}$ is the number of times that $\KK_{\sigma}$ ``turns
with respect to $C$'' as we move along $C$ (see Figure \ref{fig:The-degree-K});
and the degree of $\mult_{\sigma}$ is the sum of these numbers over
all the connected components of $\SS_{\sigma}$. The technical details
of this definition are as follows.

Consider the map $\mu_{\sigma}:\mult_{\sigma}\rightarrow S^{1}$,
where $S^{1}=\{z\in\CC||z|=1\}$, defined by 
\[
\mu_{\sigma}(\xi_{x})=\frac{\xi_{x}^{\sharp}}{u_{\sigma}(x)},
\]
where $u_{\sigma}:\SS_{\sigma}\rightarrow\tangent\SS_{\sigma}$ is
a unit tangent field compatible with the orientation of $\SS_{\sigma}$;
the quotient $\xi_{x}^{\#}/u_{\sigma}(x)$ being defined as the unique
complex number $\exp(i\psi)\in S^{1}$ such that $\xi_{x}^{\#}=\exp(i\psi)u_{\sigma}(x)$.
\begin{defn}
For a give orientation of $\SS_{\sigma}$, the degree of $\mult_{\sigma}$
is the integer 
\[
\deg(\mult_{\sigma})=\frac{1}{2\pi i}\int_{\mult_{\sigma}}\mu_{\sigma}^{*}\left(\frac{dz}{z}\right)
\]
If we define the degree of $\mult_{\sigma}$ over a connected component
$C$ of $\SS_{\sigma}$ as
\[
\deg(\mult_{\sigma}|C)=\frac{1}{2\pi i}\int_{\mult_{\sigma}|C}\mu_{\sigma}^{*}\left(\frac{dz}{z}\right)
\]
then 
\[
\deg(\mult_{\sigma})=\text{\ensuremath{\sum_{i=1}^{k}\deg(\mult_{\sigma}|C_{\sigma,i})},}
\]
where $\{C_{\sigma,i}\}_{i=1}^{k}$ are the connected components of
$\SS_{\sigma}$.
\end{defn}

\begin{rem*}
If $d\theta$ is the angular form
\[
d\theta=\frac{-ydx}{x^{2}+y^{2}}+\frac{xdy}{x^{2}+y^{2}}
\]
we can also write
\[
\deg(\mult_{\sigma})=\frac{1}{2\pi}\int_{\mult_{\sigma}}\mu_{\sigma}^{*}(d\theta).
\]
\end{rem*}
It is important to notice that the definitions of $\deg(\mult_{\sigma})$
and $\deg(\mult_{\sigma}|C)$ depend on the choice of orientation
of $\SS_{\text{\ensuremath{\sigma}}}$; we will later see how to assign
an ``natural orientation'' to $\SS_{\sigma}$ in some cases of interest.
Also, observe that set $\mult_{\sigma}|C$ is connected if $\deg(\mult_{\sigma}|C)$
is odd and it consists of two connected components if $\deg(\mult_{\sigma}|C)$
is even.
\begin{problem}
\label{Integer realisation} Given $m\in\ZZ$ does there exists a
symbol $\sigma$ such that $\deg(\mult_{\sigma})=m$?
\end{problem}

\subsection{Physical interpretation of the degree of $\protect\mult_{\sigma}$}

\begin{figure}[h]

\includegraphics[scale=0.8]{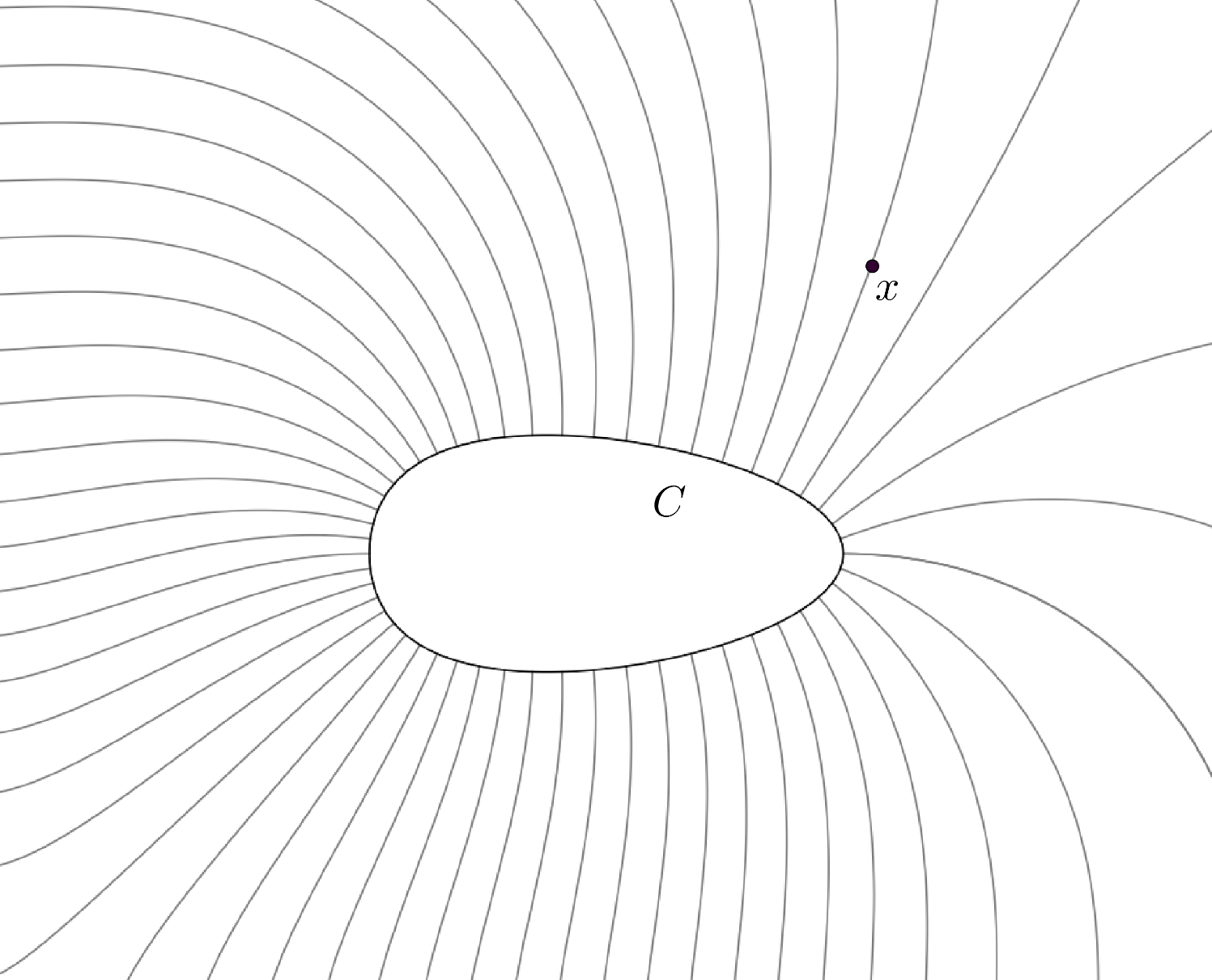}\caption{\label{fig:Rays-orthogonal-to}Rays orthogonal to $C$}

\end{figure}
Recall that when trying to find high frequency solutions to the partial
differential equation associated with $\sigma$, we arrived at the
Hamilton-Jacobi equations
\[
\der{\varphi}t(x,t)+\lambda_{\sigma,i}(d_{x}\varphi(x,t))=0,\hbox{\,\,for\,\,}i=1,2.
\]
If the phase function has the form $\varphi(x,t)=\psi(x)-t$, we obtain
\begin{equation}
\lambda_{\sigma,i}(d\psi(x))=1\hbox{\,\,for\,\,}i=1,2.\label{eq:ReducedHJ}
\end{equation}
Solutions to these equations can be obtained by the method of characteristics.
A characteristic curve is an integral line of the Hamiltonian vector
field of $\lambda_{\sigma,i}$, and the corresponding ray is its projection
onto $X$. For a $1$-dimensional submanifold $C$ of $X$ consider
the characteristic curves contained in $\{\xi\in\cotangent X|\lambda_{\sigma,i}(\xi)=1\}$
and whose corresponding rays are orthogonal to $C$ (see Figure \ref{fig:Rays-orthogonal-to}).
Solutions to \ref{eq:ReducedHJ} having $C$ as level set can be constructed
(at least near $C$)  by letting
\[
\psi(x)=\hbox{time\,of\,arrival\,of\,ray\,from\,\,}C\hbox{\,\,to\,\,}x.
\]
Consider the case when $C$ is a connected component of $\SS_{\sigma}$.
Special optical phenomena occurs when the rays emerging from $C$
are tangent to lines in $\KK_{\sigma}$. The signed count of the points
at which this happens is $\deg(\mult_{\sigma}|C)$ (see Figure \ref{fig:SingCount})
. In particular, the number of such points is bounded below by $|\deg(\mult_{\sigma}|C)|$.

\begin{figure}[h]

\includegraphics[scale=0.75]{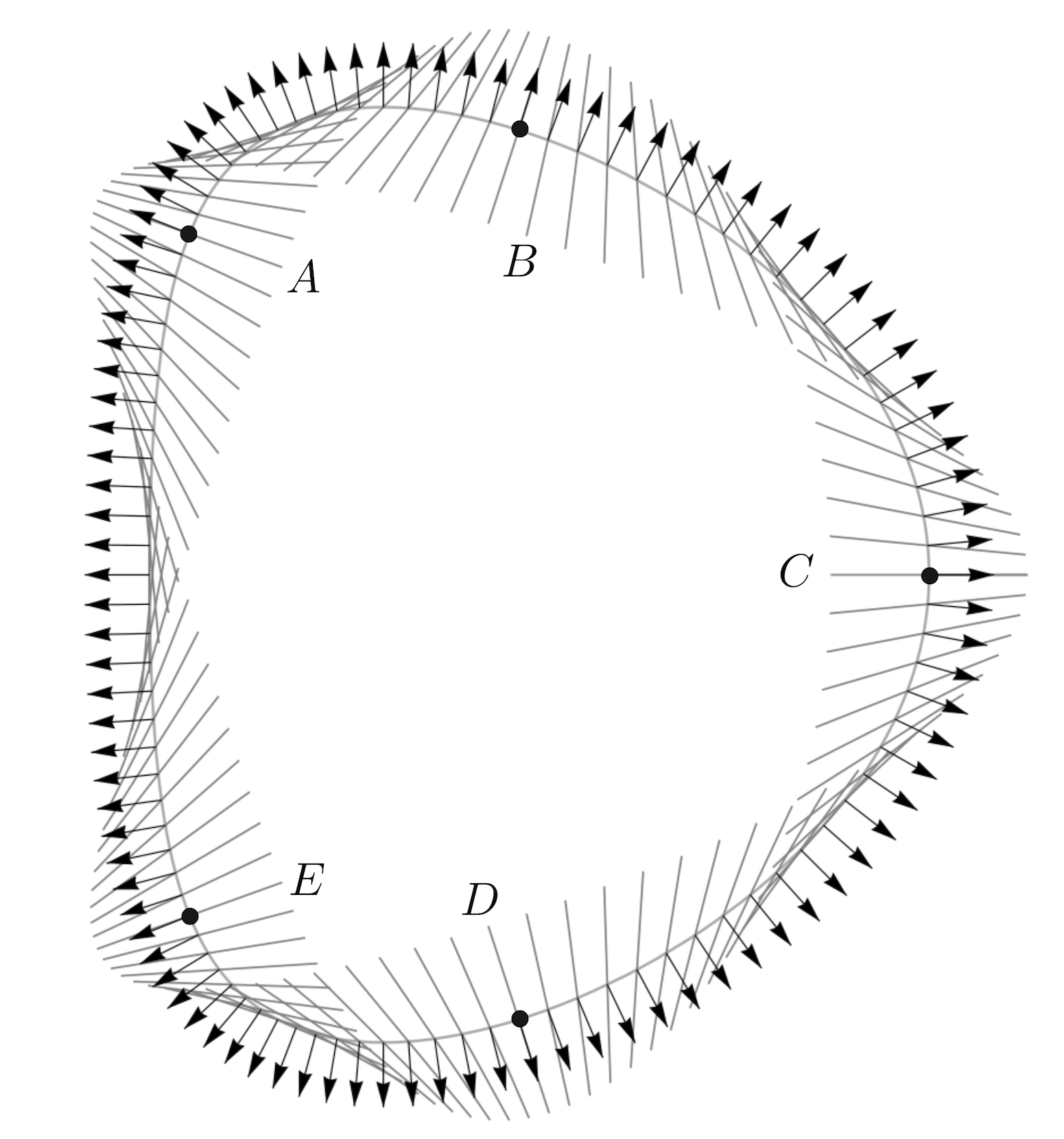}\caption{\label{fig:SingCount}The kernel bundle $\protect\KK_{\sigma}$ over
a connected component of the singular set $\protect\SS_{\sigma}$.
The lines $A,B,C,D$ and $E$ represent directions at which singular
optical phenomena occur.}

\end{figure}

\section{Symbols on the tangent bundle}

In this section we will study symbols for the case when $F=\tangent X$.
Under this assumption we will be able to give a formula to compute
the degree of the multiplicity set $\mult_{\sigma}$ in terms of the
Euler characteristics of $X$ and a sub-manifold with boundary $\NN_{\sigma}\subset X$
with $\partial\NN_{\sigma}=\SS_{\sigma}$ (see Theorem \ref{thm:MainIndexTheorem}).

\subsection{Symbols as sections of complex line bundles}

Since $F$ is an $SO(2)=U(1)$ vector bundle, we can also see it as
complex line bundle. We will let $F^{0}$ stand for the trivial complex
line bundle $X\times\CC$ and for $n>0$ define 

\begin{align*}
F^{n} & =F\otimes_{\CC}\cdots\otimes_{\CC}F,\\
F^{-n} & =\bar{F}\otimes_{\CC}\cdots\otimes_{\CC}\bar{F},
\end{align*}
where $\bar{F}$ is the conjugate bundle of $F$. If $v\in F_{x}^{n}$
and $w\in F_{x}^{m}$ then $v\otimes_{\CC}w\in F_{x}^{m+n}$, and
for simplicity we will write 
\[
vw=v\otimes_{\CC}w.
\]
Since $F$ is a one dimensional complex line bundle we have that $vw=wv$.
Furthermore, the induced norms in $F^{n}$ and $F^{m}$ are such that
$|vw|=|v||w|$. Every non-zero element $v\in F_{x}^{n}$ has an inverse
element in $v^{-1}\in F_{x}^{-n}$, which is the unique vector satisfying
$vv^{-1}=1\in F_{x}^{0}$. Hence, for $v\in F_{x}^{m}$ and $w\in F_{x}^{n}$
we can define $v/w=vw^{-1}\in F_{x}^{n-m}$. 

The following result will allow us to identify traceless symbols $\sigma$
in $\tangent X$ with two of sections $v:X\rightarrow\tangent X$
and $w:X\rightarrow(\tangent X)^{3}$. Recall that $\sym F$ has a
metric defined by
\[
g(A,B)=\frac{1}{2}\tr(A\circ B)\hbox{\,\,for\,\,}A,B\in\sym{F_{x}}.
\]

\begin{thm}
\label{thm:ISOMORPHISM}Let $F$ be an $\SO(2)$ vector bundle of
rank $2$ over $X$. There exists an isomorphism of real vector bundles
$\Phi:F\oplus F^{3}\rightarrow\hbox{Hom}(F^{*},\symz F)$ such that
if for $v\in F_{x}$ and $w\in F_{x}^{3}$ we let $\sigma=\Phi(v\oplus w)$,
then the endomorphism $G_{\sigma}:F_{x}^{*}\rightarrow F_{x}^{*}$
defined by
\[
<G_{\sigma}\xi,\eta>=g(\sigma(\xi),\sigma(\eta)).
\]
 has eigenvalues
\begin{align*}
\kappa_{1} & =\frac{1}{2}\left(|v|-|w|\right)^{2},\\
\text{\ensuremath{\kappa_{2}}} & =\frac{1}{2}\left(|v|+|w|\right)^{2}.
\end{align*}
Furthermore, for $v\not=0$ and $w\not=0$ the eigen-spaces $L_{1}$
and $L_{2}$ corresponding to $\kappa_{1}$ and $\kappa_{2}$ are
\begin{align*}
L_{1} & =\RR\cdot\{z^{\flat}\in F_{x}^{*}|z^{2}=w/v\},\\
L_{2} & =\RR\cdot\{z^{\flat}\in F_{x}^{*}|z^{2}=-w/v\}.
\end{align*}
 
\end{thm}

\begin{proof}
See section \ref{sec:Proof-ISOMORPHISM}
\end{proof}
\begin{cor}
\label{cor:SymbolsFromComplex} For any traceless symbol $\sigma:\cotangent X\rightarrow\symz{(\tangent X)}$
we can find unique sections $v:X\rightarrow\tangent X$ and $w:X\rightarrow(\tangent X)^{3}$
such that
\[
\sigma=\sigma_{s}\hbox{\,\,\,where\,\,\,}s(x)=\Phi(v(x)\oplus w(x)).
\]
In this case we have that
\[
\SS_{\sigma}=\{x\in X||w(x)/v(x)|=1\},
\]
and 
\[
\mult_{\sigma}=\bigcup_{x\in\SS_{\sigma}}\left\{ z^{\flat}\in\cotangent_{x}X|z\in\tangent_{x}X\hbox{\,\,and\,\,}z^{2}=\frac{w(x)}{v(x)}\right\} .
\]
\end{cor}

\begin{proof}
The result follows directly from Theorem \ref{thm:ISOMORPHISM} and
Proposition \ref{prop:SsigmaFromG} by letting $v$ and $w$ be given
by
\[
v(x)\oplus w(x)=\Phi^{-1}(s_{\sigma}(x)),
\]
\end{proof}
\begin{defn}
For sections $v:X\rightarrow\tangent X$ and $w:X\rightarrow(\tangent X)^{3}$
we will refer to the traceless symbol $\sigma_{v,w}:\cotangent X\rightarrow\symz{(\tangent X)}$
given by
\[
\sigma_{v,w}=\sigma_{s}\hbox{\,\,\,for\,\,\,}s(x)=\Phi(v(x)\oplus w(x)),
\]
as the symbol associated $v$ and $w$. 
\end{defn}

\begin{rem}
For the symbol $\sigma=\sigma_{v,w}$ formula \ref{eq:NormSigma0FromEigenvalues}
becomes
\[
||\sigma(\xi_{x})||^{2}=\frac{1}{2}\left(|v(x)|^{2}+|w(x)|^{2}-2|v(x)||w(x)|\cos(2\theta(x))\right).
\]
\end{rem}

\subsection{A formula for the degree of $\protect\mult_{\sigma}$ }

If $C$ is a 1-dimensional compact oriented sub-manifold of $X$ and
$v_{1},v_{2}:C\rightarrow F^{n}|C$ are non-vanishing sections, we
define the degree of $v_{2}$ with respect to $v_{1}$ as the integer
\[
\deg_{v_{1}}(v_{2})=\frac{1}{2\pi i}\int_{C}\mu^{*}\left(\frac{dz}{z}\right),
\]
where $\mu:C\rightarrow\CC-\{0\}$ is given by
\[
\mu(x)=\frac{v_{2}(x)}{v_{1}(x)}.
\]
If $\{C_{k}\}_{1\leq i\leq n}$ are the connected components of $C$
and $\{\gamma_{k}:[a_{k},b_{k}]\rightarrow C_{k}\}_{1\leq k\leq n}$
are orientation preserving parametrization maps, then 
\[
\deg_{v_{1}}(v_{2})=\frac{1}{2\pi i}\sum_{k=1}^{n}\int_{a_{k}}^{b_{k}}\left(\frac{\dot{\mu}_{k}(t)}{\mu_{k}(t)}\right)dt\hbox{\,\,where\,\,}\text{\ensuremath{\mu_{k}(t)=\mu(\gamma_{k}(t)).}}
\]
It is easy to verify that if $v_{1},v_{2}$ are non-vanishing sections
of $F^{n}$ and $w_{1},w_{2}$ are non-vanishing sections of $F^{m}$,
then
\[
\deg_{v_{1}w_{1}}(v_{2}w_{2})=\deg_{v_{1}}(v_{2})+\deg_{w_{1}}(w_{2}).
\]

\begin{defn}
\label{def:DegreeRespectToCurve}For a non-vanishing section $v:C\rightarrow(\tangent X)^{n}|C$
the degree of $v$ with respect to $C$ is the integer
\[
\deg_{C}(v)=\deg_{u^{n}}(v),
\]
 where $u:C\rightarrow TC$ is a unit tangent field compatible with
the orientation of $C$.
\end{defn}

\begin{prop}
\label{prop:IndexMultFromFields}Consider sections $v:X\rightarrow\tangent X$
and $w:X\rightarrow(\tangent X)^{3}$ with no common zeros. If $\sigma=\sigma_{v,w}$
and $C$ is a connected component of $\SS_{\sigma}$ then 
\[
\deg(\mult_{\sigma}|C)=\deg_{C}(w)-\deg_{C}(v),
\]
and hence
\[
\deg(\mult_{\sigma})=\deg_{\SS_{\sigma}}(w)-\deg_{\SS_{\sigma}}(v).
\]
\end{prop}

\begin{proof}
The set $\mult_{\sigma}$ consists of the elements of the form $z^{\flat}(x)\in\cotangent X|\SS_{\sigma}$
such that $z(x)\in\tangent X|\SS_{\sigma}$ satisfies
\[
z(x)=\pm\left(\frac{w(x)}{v(x)}\right)^{1/2}.
\]
This formula can be re-written as
\[
\frac{z(x)}{u(x)}=\pm(h(x))^{1/2}
\]
for $h:\SS_{\sigma}\rightarrow\CC-\{0\}$ given by
\[
h(x)=\left(\frac{v(x)}{u(x)}\right)^{-1}\left(\frac{w(x)}{u^{3}(x)}\right),
\]
where $u(x)\in T_{x}\SS_{\sigma}$ is unitary and compatible with
the orientation of $\SS_{\sigma}$. If $\gamma:[a,b]\rightarrow C$
is an orientation preserving arc-length parametrization of $C$ then
$u(t)=\dot{\gamma}(t)$. If we write $z(t)=z(\gamma(t)),v(t)=v(\gamma(t)),w(t)=w(\gamma(t)),h(t)=h(\gamma(t))$,
and let $\psi,\theta:[a,b]\rightarrow\RR$ be continuous functions
such that 
\[
\frac{w(t)}{u^{3}(t)}=|w(t)|\exp(i\psi(t))\hbox{\,\,and\,\,\,\,}\frac{v(t)}{u(t)}=|v(t)|\exp(i\theta(t)),
\]
then
\begin{align*}
\deg(w|C) & =\frac{\psi(b)-\psi(a)}{2\pi},\\
\deg(v|C) & =\frac{\theta(b)-\theta(a)}{2\pi}.
\end{align*}
Since $\gamma(t)\in\SS_{\sigma}$, we have that $|w(t)|=|v(t)|$ and
hence
\[
h(t)=\exp\left(i\varphi(t)\right)\hbox{\,\,where\,\,}\varphi(t)=\psi(t)-\theta(t).
\]
From the above we conclude that
\[
\frac{z(t)}{u(t)}=\exp(i\varphi(t)/2)\hbox{\,\,or\,\,\,}\frac{z(t)}{u(t)}=\exp(i(\varphi(t)+\pi)/2),
\]
and hence
\begin{align*}
\deg(\mult_{\sigma}|C) & =\frac{1}{2\pi}\left(\left(\frac{\varphi(b)-\varphi(a)}{2}\right)+\left(\frac{(\varphi(b)+\pi)-(\varphi(a)+\pi)}{2}\right)\right)\\
 & =\frac{\varphi(b)-\varphi(a)}{2\pi}\\
 & =\frac{\psi(b)-\psi(a)}{2\pi}-\frac{\theta(b)-\theta(a)}{2\pi}\\
 & =\deg(w|C)-\deg(v|C).
\end{align*}
The second part of the proposition is a direct consequence of the
first part.
\end{proof}
We will denote the set of zeros of a section $v:X\rightarrow(TX)^{n}$
as $Z_{v}$. If $x\in X$ is an isolated zero of $v$, the degree
of $v$ at $x$ can be computed as
\[
\deg_{x}(v)=\deg_{w|\partial D}(v),
\]
where $D$ is a ``small disk'' with $Z_{v}\text{\ensuremath{\cap D=\{x\}}}$
and $w:D\rightarrow(\tangent X)^{n}|D$ is a non-vanishing vector
field. 

\begin{prop}
\label{prop:IndexFromLocalDegrees-1}Let $Y\subset X$ be a 2-dimensional
manifold with boundary $\partial Y$, and consider sections $u,v:Y\rightarrow(\tangent X)^{n}|Y$
whose zero sets $Z_{u}$ and $Z_{v}$ are such that $(Z_{u}\cup Z_{v})\cap Y$
is finite and $Z_{u}\cap\partial Y=Z_{v}\cap\partial Y=\emptyset$.
We have that 
\[
\deg_{u|\partial Y}(v)=\sum_{x\in Z_{v}\cap Y}\deg_{x}(v)-\sum_{x\in Z_{u}\cap Y}\deg_{x}(u).
\]
\end{prop}

\begin{proof}
Let $Y_{0}$ be the set obtained from $Y$ by removing a family of
small open disks $\{D_{x}\}_{x\in(Z_{u}\cup Z_{v})\cap Y}$, where
each $D_{x}$ contains $x$ and each $D_{x}$ is contained in the
interior of $Y$. Consider the map $\mu:Y_{0}\rightarrow\CC-\{0\}$
defined by
\[
\mu(x)=\frac{v(x)}{u(x)}.
\]
Using Stokes Theorem and the fact that the form $dz/z$ is closed
in $\CC-\{0\}$, we obtain
\[
\int_{\partial Y_{0}}\mu^{*}\left(\frac{dz}{z}\right)=0.
\]
From this and the formula
\[
\partial Y_{0}=\partial Y-\sum_{x\in(Z_{u}\cup Z_{v})\cap Y}\partial D_{x},
\]
we obtain
\[
\deg_{u|\partial Y}(v)=\frac{1}{2\pi i}\sum_{x\in(Z_{u}\cup Z_{v})\cap Y}\int_{\partial D_{x}}\mu^{*}\left(\frac{dz}{z}\right).
\]
Since we can write
\[
\frac{v}{u}=\frac{v/w_{x}^{n}}{u/w_{x}^{n}},
\]
for $w_{x}:D_{x}\rightarrow TX|D_{x}$ a non-vanishing vector field,
we have
\[
\frac{1}{2\pi i}\int_{\partial D_{x}}\mu^{*}\left(\frac{dz}{z}\right)=\deg_{x}(v)-\deg_{x}(u).
\]
We conclude that
\[
\deg_{u|\partial Y}=\frac{1}{2\pi i}\left(\sum_{x\in(Z_{u}\cup Z_{v})\cap Y}\deg_{x}(v)-\sum_{x\in(Z_{u}\cup Z_{v})\cap Y}\deg_{x}(u)\right).
\]
The result of the Proposition then follows from the formulas
\begin{align*}
\sum_{x\in(Z_{u}\cup Z_{v})\cap Y}\deg_{x}(v) & =\sum_{x\in Z_{v}\cap Y}\deg_{x}(v)+\sum_{x\in(Z_{u}-Z_{v})\cap Y}\deg_{x}(v),\\
\sum_{x\in(Z_{u}\cup Z_{v})\cap Y}\deg_{x}(u) & =\sum_{x\in Z_{u}\cap Y}\deg_{x}(u)+\sum_{x\in(Z_{v}-Z_{u})\cap Y}\deg_{x}(u),
\end{align*}
and observing that $\deg_{x}(v)=0$ for $x\in Z_{u}-Z_{v}$ and $\deg_{x}(u)=0$
for $x\in Z_{v}-Z_{u}$.
\end{proof}
\begin{rem}
When using Stoke's Theorem in the proof of the above Proposition we
implicitly assumed that $\partial Y$ is oriented by the field $u=in$,
where $n:\partial Y\rightarrow\tangent X|\partial Y$ is the unit
normal field to $\partial Y$ that points to the outside of $Y$.
\end{rem}

\begin{cor}
\label{cor:MainCorollary}Let $f:X\rightarrow\RR$ be a smooth function
having $0$ as a regular value, and let
\[
Y=\{x\in X|f(x)\leq0\}.
\]
If $v:X\rightarrow(\tangent X)^{n}$ does not vanish in $\partial Y$
and has a finite number of zeros in $Y,$ then
\[
\deg_{\partial Y}(v)=\sum_{x\in Z_{v}\cap Y}\deg_{x}(v)-n\chi(Y).
\]
\end{cor}

\begin{proof}
Let $g$ be a small perturbation of $f$ that makes $g$ a Morse function.
The gradient field $\nabla f$ points to the outside of $Y$. By choosing
$g$ close enough to $f$ we can ensure that $\nabla g$ will also
point to the outside of $Y$ and 
\[
\deg_{(\nabla f)^{n}|\partial Y}(v)=\deg_{(\nabla g)^{n}|\partial Y}(v).
\]
Using this formula and the fact that $i\nabla f/|\nabla f|$ is the
unit tangent field to $\partial Y$ compatible with its orientation,
we obtain
\begin{align*}
\deg_{\partial Y}(v) & =\deg_{(i\nabla f/|\nabla f|)^{n}|\partial Y}(v)\\
 & =\deg_{(\nabla f)^{n}|\partial Y}(v)\\
 & =\deg_{(\nabla g)^{n}|\partial Y}(v).
\end{align*}
Since $g$ is a Morse function the set of critical points $C_{g}$
of $g$ must be finite. Furthermore, since $g$ is close to $f$ and
$0$ is a regular value of $f$, we have that none of the zeros of
$\nabla g$ can be in $\partial Y.$ Hence, we can apply Proposition
\ref{prop:IndexFromLocalDegrees-1} to obtain
\begin{align*}
\deg_{\partial Y}(v) & =\sum_{x\in Y\cap Z_{v}}\deg_{x}(v)-\sum_{x\in Y\cap C_{g}}\deg_{x}((\nabla g)^{n})\\
 & =\sum_{x\in Y\cap Z_{v}}\deg_{x}(v)-n\sum_{x\in Y\cap C_{g}}\deg_{x}(\nabla g)
\end{align*}
The result now follows from the fact that $\nabla g$ points to the
outside of $Y$, so that (see \cite{kn:milnor_difftop})
\[
\sum_{x\in Y\cap C_{g}}\deg_{x}(\nabla g)=\chi(Y).
\]
\end{proof}
\begin{thm}
\label{thm:MainIndexTheorem}Let $\sigma=\sigma_{v,w}$ be a traceless
symbol in general position. The set 
\[
\NN_{\sigma}=\{x\in X||w(x)/v(x)|\leq1\}
\]
is a smooth manifold with boundary and 
\[
\deg(\mult_{\sigma})=3\chi(X)-2\chi(\NN_{\sigma}).
\]
\end{thm}

\begin{proof}
For symbols in general position the function
\[
f(x)=\left|\frac{w(x)}{v(x)}\right|-1
\]
has $0$ as a regular value, which implies that $\NN_{\sigma}$ is
a smooth manifold with boundary. Using Proposition \ref{prop:IndexMultFromFields}
we obtain
\[
\deg(\mult_{\sigma})=\deg_{\SS_{\sigma}}(w)-\deg_{\SS_{\sigma}}(v).
\]
Since $\SS_{\sigma}=\partial\NN_{\sigma}$, by Corollary \ref{cor:MainCorollary}
we have that
\begin{align*}
\deg_{\SS_{\sigma}}(w) & =\sum_{x\in Z_{w}\cap\NN_{\sigma}}\deg_{x}(w)-3\chi(\NN_{\sigma}),\\
\deg_{\SS_{\sigma}}(v) & =\sum_{x\in Z_{v}\cap\NN_{\sigma}}\deg_{x}(v)-\chi(\NN_{\sigma}).
\end{align*}
For symbols in general position $Z_{v}\cap Z_{w}=\emptyset$ and hence
$Z_{v}\cap\NN_{\sigma}=\emptyset$, since $v(x)=0$ for $x\in\NN_{\sigma}$
would imply that $w(x)=0$. We conclude that
\[
\deg(\mult_{\sigma})=\sum_{x\in Z_{w}\cap\NN_{\sigma}}\deg_{x}(w)-2\chi(\NN_{\sigma})
\]
Using the fact that $Z_{w}\subset\NN_{\sigma}$, we obtain 
\[
\sum_{x\in Z_{w}\cap\NN_{\sigma}}\deg_{x}(w)=\sum_{x\in Z_{w}}\deg_{x}(w)=\int_{X}e\left((TX)^{3}\right)=3\chi(X).
\]
Hence
\[
\deg(\mult_{\sigma})=3\chi(X)-2\chi(\NN_{\sigma}).
\]
\end{proof}
\begin{rem}
From the proof of the above Theorem we can see that if $D$ is a connected
component of $\NN_{\sigma}$, then
\begin{equation}
\deg(\mult_{\sigma}|\partial D)=\left(\sum_{x\in Z_{v}\cap D}\deg_{x}(w)\right)-2\chi(D).\label{eq:MainFormulaPerComponent}
\end{equation}
\end{rem}

\section{Examples}

\subsection{Holomorphic symbols on the complex plane}

\begin{figure}[h]

\subfloat[\label{fig:IndexAnnulus}$r_{0}=1$]{\includegraphics[scale=0.25]{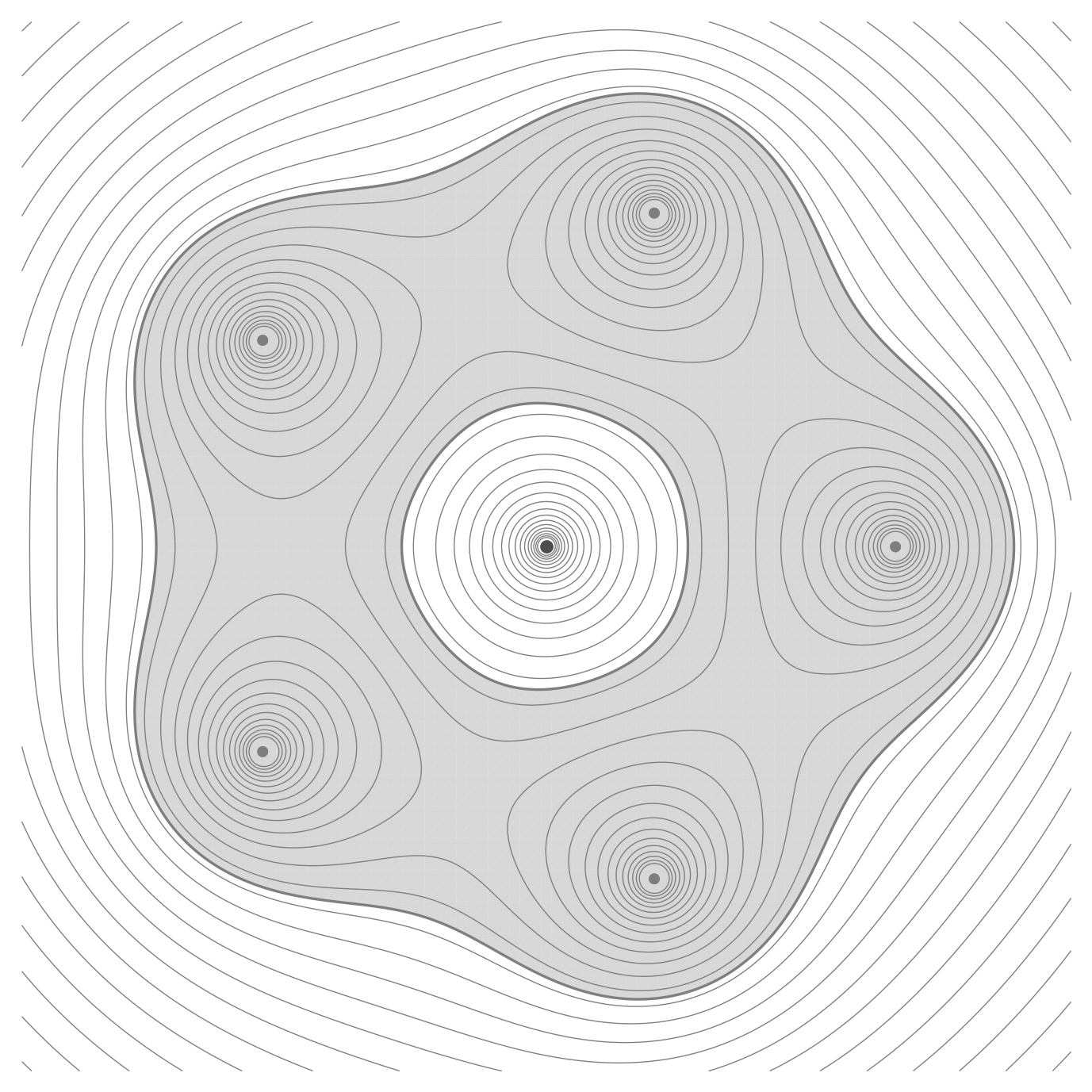}}\subfloat[\label{fig:Index5Discs}$r_{0}=1/3$]{\includegraphics[scale=0.25]{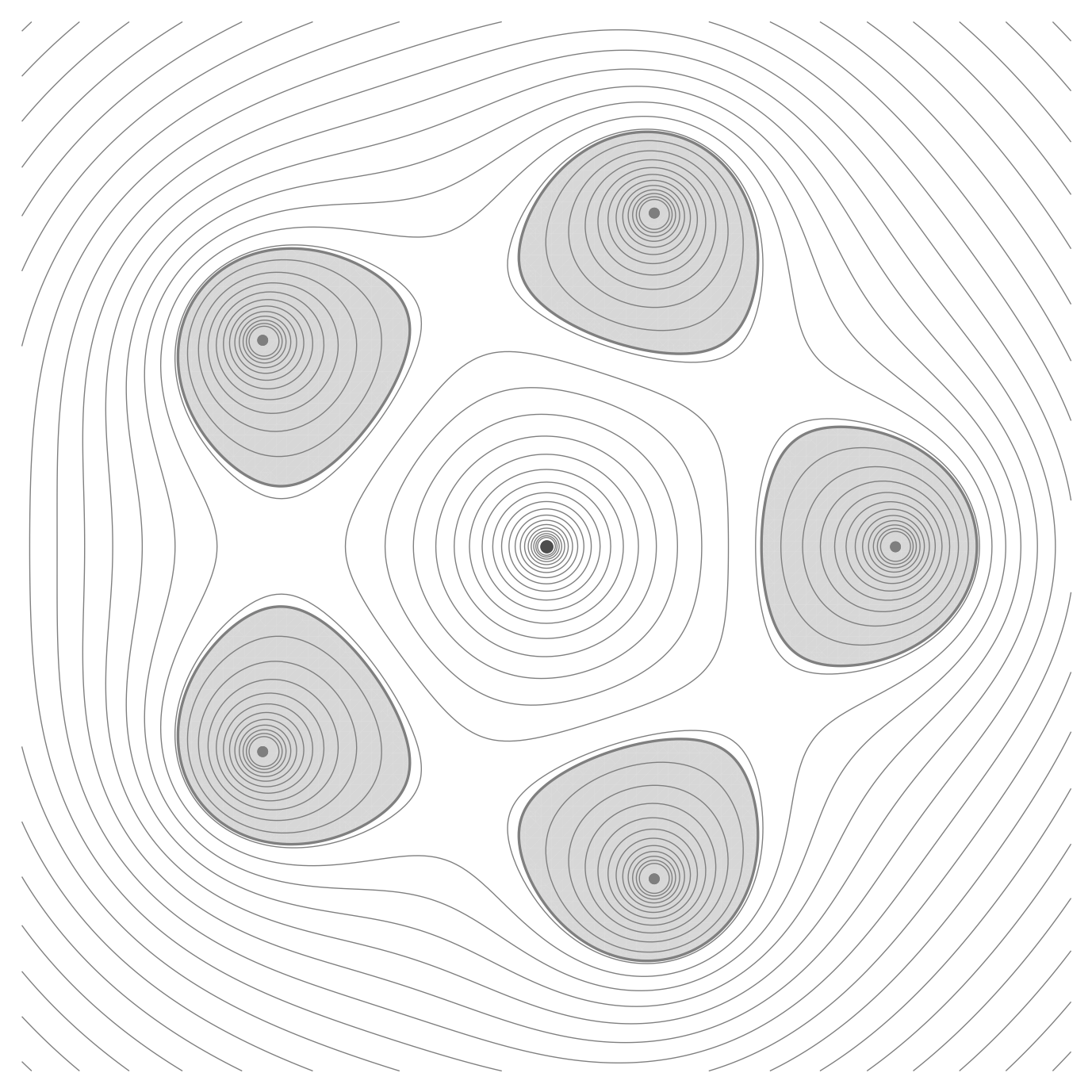}}\caption{\label{fig:PositiveChargesPlane} Equipotential curves for five negative
charges at the $5$-th roots of unity, and one positive charge at
$0$. The shaded regions represent the set $\protect\NN_{\sigma}$
for the given value of $r_{0}$.}
\end{figure}
Although we have assumed throughout the paper that the surface $X$
is compact, it will be instructive to study symbols on the plane $\RR^{2}=\CC$.
We can write the traceless part of such symbols as $\sigma=\sigma_{v,w}$
for
\[
v(z)=a(z)\dero x\and w(z)=b(z)\left(\dero x\right)^{3},
\]
where $a,b$ and complex valued functions on the complex plane. Since
we are using the flat metric in $\CC=\RR^{2}$ we have that
\[
|v(z)|=|a(z)|\text{\hbox{\,\,and\,\,}}|w(z)|=|b(z)|,
\]
and hence
\[
\SS_{\sigma}=\{z\in\CC||b(z)/a(z)|=1\}.
\]
If $a$ and $b$ are complex polynomials
\[
a(z)=a_{0}\Pi_{i=1}^{n_{a}}(z-a_{i})\and b(z)=b_{0}\Pi_{j=1}^{n_{b}}(z-b_{i})
\]
then $\SS_{\sigma}$ consists of the points $z\in\CC$ that satisfy
the equation
\begin{equation}
\sum_{i=1}^{n_{a}}\log\left(\frac{1}{|z-a_{i}|}\right)-\sum_{j=1}^{n_{b}}\log\left(\frac{1}{|z-b_{i}|}\right)=r_{0},\label{eq:Equipotential}
\end{equation}
where
\[
r_{0}=\log\left(\frac{|a_{0}|}{|b_{0}|}\right).
\]
The harmonic function $z\mapsto\log(1/|z-p|)$ represents the potential
of a positively charged particle at $p$. We conclude that $\SS_{\sigma}$
is an equipotential curve of a superposition of positively and negatively
charged particles. The negative charges are inside of $\NN_{\sigma}$
and the positive charges outside of it.
\begin{example}
Let us assume there are no positive charges for a $n\geq1$ let $b(z)=z^{n}$.
For $r_{0}=0$ the set $\NN_{\sigma}$ is the unit disk, and hence
$\chi(\NN_{\sigma})=1$. Since $\deg_{0}(b)=n$, using Formula \ref{eq:MainFormulaPerComponent}
we obtain
\[
\deg(\mult_{\sigma})=n-2.
\]
\begin{example}
Consider the case where we have $5$ negative charges located at $5$-th
roots of unity and one positive charge at $0$. For $r_{0}=1$ the
set $\NN_{\sigma}$ is a topological annulus (see Figure \ref{fig:IndexAnnulus})
so that $\chi(\NN_{\sigma})=0$, and hence
\[
\deg(\mult_{\sigma})=5-2\chi(\NN_{\sigma})=5.
\]
For $r_{0}=1/3$ the set $\NN_{\sigma}$ consists of 5 topological
discs (see Figure \ref{fig:Index5Discs}) so that $\chi(\NN_{\sigma})=5$,
and hence
\[
\deg(\mult_{\sigma})=5-2\chi(\NN_{\sigma})=-5.
\]
\end{example}

\end{example}

\subsection{Holomorphic symbols on the sphere}

\begin{figure}[h]
\includegraphics[scale=0.35]{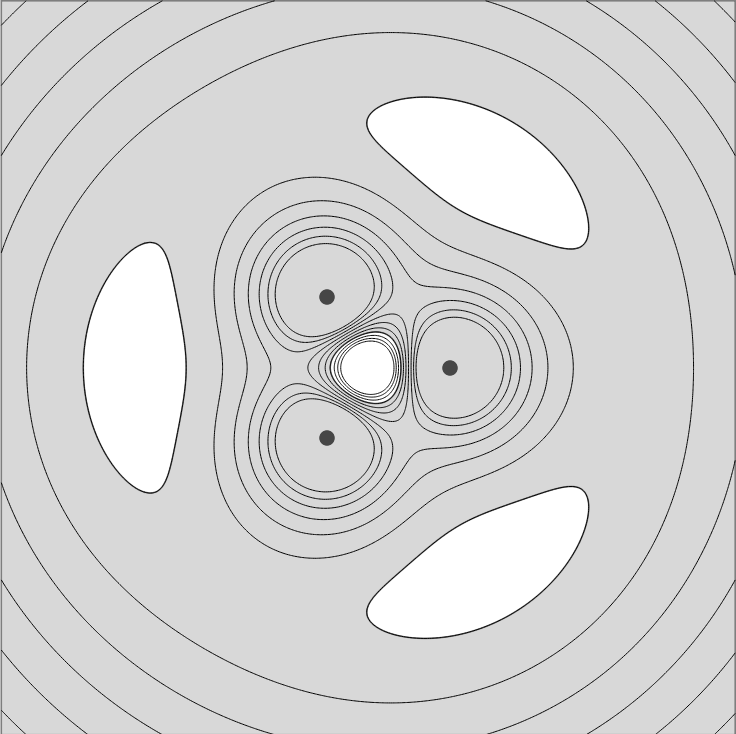}\caption{\label{fig:NsigmaSphere}Equipotential lines of 3 negatively charges
on the sphere. The shaded region represents the set $\protect\NN_{\sigma}\subset S^{2}$.}
\end{figure}
We consider the sphere as the Riemann sphere. The vector fields in
$\tangent S^{2}$ and $(\tangent S^{2})^{3}$ in $z$-coordinates
have the form 
\[
v(z)=a(z)\dero x\hbox{\,\,and\,\,}w(z)=b(z)\left(\dero x\right)^{3}.
\]
The metric in $S^{2}$ is such that
\[
\left|\dero x\right|=\frac{2}{1+|z|^{2}}\hbox{\,\,\,for\,\,\,}\dero x\in\tangent_{z}S^{2},
\]
and hence 
\[
|v(z)|=\frac{2|a(z)|}{1+|z|^{2}}\hbox{\,\,and\,\,}|w(z)|=\frac{8|b(z)|}{(1+|z|^{2})^{3}}.
\]
We conclude that $\SS_{\sigma}$ consists of the points $z$ satisfying
\begin{equation}
\left(\frac{4}{(1+|z|^{2})^{2}}\right)\frac{|b(z)|}{|a(z)|}=1.\label{eq:SingSetSphere}
\end{equation}
If we denote the complex variables at $0$ and $\infty$ by $z=x+iy$
and $w=u+iv$, the change of variables formula is 
\[
w=\frac{1}{z},
\]
so that

\[
\dero x=-z^{-2}\dero u.
\]
From this it follows that for any quadratic polynomial $P(z)=a_{0}+a_{1}z+a_{2}z^{2}$
the field
\[
z\mapsto P(z)\dero x=(a_{0}+a_{1}z+a_{2}z^{2})\dero x
\]
extends to a vector field over the whole sphere $S^{2}$. In fact,
in $w$ coordinates the above field is written as
\[
w\mapsto-w^{2}P(1/w)\dero u=-(a_{2}+a_{1}w+a_{0}w^{2})\dero u.
\]
Suppose that $a$ is a polynomial of degree $n_{a}$ and $b$ a polynomial
of degree $n_{b}$. From the above discussion we conclude that for
$0\leq n_{a}\leq2$ and $0\leq n_{b}\leq6$ the sections
\[
z\mapsto a(z)\dero x\and z\mapsto b(z)\left(\dero x\right)^{3}
\]
correspond to a traceless symbols on $\tangent S^{2}$. Equation \ref{eq:SingSetSphere}
for the singularity can be written 
\begin{equation}
2\log\left(\frac{2}{1+|z|^{2}}\right)+\sum_{i=1}^{n_{a}}\log\left(\frac{1}{|z-a_{i}|}\right)-\sum_{i=1}^{n_{b}}\log\left(\frac{1}{|z-b_{i}|}\right)=r_{0}\label{eq:SinSetSphereLog}
\end{equation}
where $r_{0}=\log(|a_{0}|/|b_{0}|)$. This formula is very similar
to the one we have studied for symbols in the plane, but for the appearance
of the first term which accounts for the fact that we are working
on the sphere.
\begin{example}
Let us assume that there are no positive charges and that $b(z)=z^{n}$.
Some simple calculations show that for for $n=1$ the set $\NN_{\sigma}$
consists of two topological disks on $S^{2}$, and for $n=4$ the
set $\NN_{\sigma}$ consists of a single topological disk on $S^{2}$.
Using Theorem \ref{thm:MainIndexTheorem} we obtain
\[
\deg(\mult_{\sigma})=\begin{cases}
2 & \hbox{\,\,for\,\,}n=1,\\
4 & \hbox{\,\,for\,\,}n=4.
\end{cases}
\]
\begin{example}
Let us assume that there are no positive charges and that we have
$3$ negative charges located at $1/3,\exp(2\pi/3)/3,\exp(4\pi/3)/3$
. The set $\NN_{\sigma}\subset S^{2}$ has 4 holes (see Figure \ref{fig:NsigmaSphere})
so that $\chi(\NN_{\sigma})=-3$, an hence
\[
\deg(\mult_{\sigma})=12.
\]
 
\end{example}

\end{example}

\section{\label{sec:Proof-Mult-As-Kernel}Proof of theorem \ref{thm:MultiplicityAsKernelBundle}}

We will prove a slightly more general result. Let $E_{1}$ and $E_{2}$
be $\SO(2)$ vector bundles of rank $2$ over a surface $X$; for
our particular case $E_{1}=\cotangent X$ and $E_{2}=\symz F$. The
bundle of homomorphisms $\hbox{Hom}(E_{1},E_{2})$ stratifies as
\[
\hbox{Hom}(E_{1},E_{2})=\bigcup_{i=0}^{2}[\hbox{Hom}(E_{1},E_{2})]_{i}
\]
where 
\[
[\hbox{Hom}(E_{1},E_{2})]_{i}=\{A\in\hbox{Hom}(E_{1},E_{2})|\dim(\ker(A)=i\},
\]
is a smooth sub-manifold of $\hbox{Hom}(E_{1},E_{2})$ of co-dimension
$i^{2}$ (see \cite[pg. 28]{kn:arnold4}). If for a section $s:X\rightarrow\hbox{Hom}(E_{1},E_{2})$
we define
\begin{align*}
\SS_{s} & =\{x\in X|\dim(\ker(s(x))>0\},\\
\KK_{s} & =\bigcup_{x\in\SS_{s}}\ker(s(x)),
\end{align*}
we then have that 
\[
\SS_{s}=\SS_{s,1}\cup\SS_{s,2}\hbox{\,\,\,and\,\,\,}\KK_{s}=\KK_{s,1}\cup\KK_{s,2},
\]
where
\[
\SS_{s,i}=\{x\in X|\dim(\ker(s(x)))=i\}=s^{-1}\left([\hbox{Hom}(E_{1},E_{2})]_{i}\right)
\]
and
\[
\KK_{s,i}=\bigcup_{x\in\SS_{s,i}}\ker(s(x)).
\]
For sections in general position the set $\SS_{s,2}$ is empty since
$[\hbox{Hom}(E_{1},E_{2})]_{2}$ has co-dimension 4 and $X$ has dimension
2. Hence, we can generically assume that
\[
\SS_{s}=\SS_{s,1}\hbox{\,\,and\,\,\,}\KK_{s}=\KK_{s,1}.
\]
We define
\[
S(E_{1})=\{v\in E_{1}|<v,v>=1\},
\]
and for a given section $s:X\rightarrow\hbox{Hom}(E_{1},E_{2})$ let
$\sigma_{s}:E_{1}\rightarrow E_{2}$ be given by
\[
\sigma_{s}(v)=s(x)v\hbox{\,\,\,for\,\,\,}v\in E_{x}.
\]

\begin{lem}
Let $s$ be a section of $\hbox{Hom}(E_{1},E_{2})$. Under the generic
assumption that $\SS_{s}=\SS_{s,1}$, the section $s$ is transversal
to $[\hbox{Hom}(E_{1},E_{2})]_{1}$ if and only if $\sigma_{s}|S(E_{1})$
is transversal to the zero section of $E_{2}$.
\end{lem}

\begin{proof}
We can locally trivialize $E_{1}$ and $E_{2}$ by choosing local
orthonormal frames over an appropriate open set $U\subset X$. Using
these trivilalizations we can write
\[
S(E_{1}|U)=U\times S^{1}=\{(x,v_{1},v_{2})|v_{1}^{2}+v_{2}^{2}=1\}
\]
and
\[
\sigma_{s}(x,v)=s(x)v
\]
where 
\[
s(x)=\left(\begin{array}{cc}
a_{1}(x) & b_{1}(x)\\
a_{2}(x) & b_{2}(x)
\end{array}\right)\hbox{\,and\,\,\,}v=\left(\begin{array}{c}
v_{1}\\
v_{2}
\end{array}\right).
\]
Consider a curve $t\mapsto(x(t),v(t))\in U\times S^{1}$ which allows
us to express $s$ and $v$ in term of $t$, so that
\[
d\sigma_{s}(x,v)(\dot{x},\dot{v})=\dot{s}v+s\dot{v}.
\]
For a fixed $(x,v)$ with $v_{1}^{2}+v_{2}^{2}=1$ that satisfies
$\sigma_{s}(x,v)=0$ we must have scalars $\alpha_{1},\alpha_{2}\in\RR$
such that 
\begin{equation}
(a_{i}(x),b_{i}(x))=\alpha_{i}(-v_{2},v_{1})\label{eq:sigma0EqualsZeroEquation}
\end{equation}
The condition that $v$ is unitary implies that $\dot{v}=\beta(-v_{2},v_{1})$
for a scalar $\beta\in\RR$. Using the above identities we obtain
\[
d\sigma_{s}(x,v)(\dot{x},\dot{v})=\left(\begin{array}{c}
\dot{\gamma}_{1}\\
\dot{\gamma}_{2}
\end{array}\right)+\beta\left(\begin{array}{c}
\alpha_{1}\\
\alpha_{2}
\end{array}\right)
\]
where
\[
\dot{\gamma}_{i}=\dot{a}_{i}v_{1}+\dot{b}_{i}v_{2}.
\]
Hence, the condition of $\sigma_{s}$ being transversal to the zero
section of $\symz F$ at a given point $(x,v)$ means that there exists
a $\dot{x}\in\RR^{2}$ such that 
\begin{equation}
\det\left(\begin{array}{cc}
\dot{\gamma}_{1} & \dot{\gamma}_{2}\\
\alpha_{1} & \alpha_{2}
\end{array}\right)\not=0.\label{eq:TransvesalityEqsForSigma0}
\end{equation}
The condition that $s$ is transversal to $[\hbox{Hom}(E_{1},E_{2})]_{1}$
is equivalent to $0\in\RR$ being a regular value of $f_{s}=\det(s)=a_{1}b_{2}-a_{2}b_{1}$.
This last conditions implies that for $x\in f^{-1}(0)$ there must
exists $\dot{x}$ such that 
\[
df_{s}(x)\dot{x}=\dot{a}_{1}b_{2}+a_{1}\dot{b}_{2}-\dot{a}_{2}b_{1}-a_{2}\dot{b}_{1}\not=0
\]
If $f_{s}(x)=0$ then we must have that $\ker(s(x))>0.$ This means
that there exists $(v_{1},v_{2})$ with $v_{1}^{2}+v_{2}^{2}=1$ and
such that (\ref{eq:sigma0EqualsZeroEquation}) holds. Using this,
a simple calculation shows that
\[
df(x)\dot{x}=\det\left(\begin{array}{cc}
\dot{\gamma}_{1} & \dot{\gamma}_{2}\\
\alpha_{1} & \alpha_{2}
\end{array}\right).
\]
Hence, the condition $df(x)\dot{x}\not=0$ is the same as (\ref{eq:TransvesalityEqsForSigma0}).
\end{proof}
We conclude the proof of Theorem \ref{thm:MultiplicityAsKernelBundle}
as follows. The condition of $s$ being transversal to $[\hbox{Hom}(E_{1},E_{2})]_{1}$
holds for symbols in general position. From the above Lemma this implies
that the corresponding symbol $\sigma=\sigma_{s}$ is also transversal
to the zero section of $\symz F.$ Furthermore, the transversality
of $s$ to $[\hbox{Hom}(E_{1},E_{2})]_{1}$ implies that 
\[
\SS_{\sigma}=\sigma^{-1}\left([\hbox{Hom}(E_{1},E_{2})]_{1}\right)
\]
is a smooth sub-manifold of $X$. Finally, the bundle $\KK_{\sigma}$
is smooth since from formula \ref{eq:sigma0EqualsZeroEquation} we
have that 
\[
\KK_{\sigma}|U=\left\{ \left(x,\frac{(b_{1}(x),-a_{1}(x))}{(a_{1}^{2}(x)+b_{1}^{2}(x))^{1/2}}\right)\right\} _{x\in\SS_{\sigma}},
\]
and the functions $a_{1}$ and $b_{2}$ are smooth functions with
$(a_{1}(x),b_{1}(x))\not=(0,0)$ for $x\in S_{\sigma}|U.$

\section{\label{sec:Proof-ISOMORPHISM}Proof of theorem \ref{thm:ISOMORPHISM}}

An homomorphism from $(\RR^{2})^{*}$ to $\symz{\RR^{2}}$ can be
seen as an element $\RR^{2}\otimes_{\RR}\symz{\RR^{2}}.$ The rotation
matrix $R_{\theta}\in\SO(2)$ 
\[
R_{\theta}=\left(\begin{array}{cc}
\cos(\theta) & -\sin(\theta)\\
\sin(\theta) & \cos(\theta)
\end{array}\right)
\]
acts on elements of the form $v\otimes A\in\RR^{2}\otimes_{\RR}\symz{\RR^{2}}$
as 
\begin{equation}
R_{\theta}\cdot(v\otimes_{\RR}A)=(R_{\theta}v)\otimes_{\RR}(R_{\theta}AR_{\text{\ensuremath{\theta}}}^{T}).\label{eq:ActionSO2OnSymbols}
\end{equation}
If we let
\[
e_{1}=(1,0),e_{2}=(0,1),f_{1}=\left(\begin{array}{cc}
1 & 0\\
0 & -1
\end{array}\right),f_{2}=\left(\begin{array}{cc}
0 & 1\\
1 & 0
\end{array}\right)
\]
then on the basis formed by $e_{1}\otimes f_{1},e_{1}\otimes f_{2},e_{2}\otimes f_{1},e_{2}\otimes f_{2}$
the action $R_{\theta}$ has matrix representation
\[
M_{\theta}=\left(\begin{array}{cccc}
\cos(\theta)\cos(2\theta) & -2\cos^{2}(\theta)\sin(\theta) & -\cos(2\theta)\sin(\theta) & \sin(\theta)\sin(2\theta)\\
\cos(\theta)\sin(2\theta) & \cos(\theta)\cos(2\theta) & -2\cos(\theta)\sin^{2}(\theta) & -\cos(2\theta)\sin(\theta)\\
\cos(2\theta)\sin(\theta) & -2\cos(\theta)\sin^{2}(\theta) & \cos(\theta)\cos(2\theta) & -2\cos^{2}(\theta)\sin(\theta)\\
\sin(\theta)\sin(2\theta) & \cos(2\theta)\sin(\theta) & \cos(\theta)\sin(2\theta) & \cos(\theta)\cos(2\theta)
\end{array}\right).
\]
The matrix $M_{\theta}$ has eigenvalues $e^{-i\theta},e^{i\theta},e^{-3\theta},e^{3i\theta}$
whose corresponding eigenvectors are the columns of the matrix
\[
\left(\begin{array}{cccc}
1 & 1 & -1 & -1\\
i & -i & -i & i\\
-i & i & -i & i\\
1 & 1 & 1 & 1
\end{array}\right).
\]
Using the real and imaginary parts of the eigenvectors of $M_{\theta}$
corresponding to $e^{i\theta}$ and $e^{3i\theta}$ we construct the
orthogonal matrix
\[
E=\left(\begin{array}{cccc}
\frac{1}{\sqrt{2}} & 0 & -\frac{1}{\sqrt{2}} & 0\\
0 & \frac{1}{\sqrt{2}} & 0 & -\frac{1}{\sqrt{2}}\\
0 & -\frac{1}{\sqrt{2}} & 0 & -\frac{1}{\sqrt{2}}\\
\frac{1}{\sqrt{2}} & 0 & \frac{1}{\sqrt{2}} & 0
\end{array}\right)
\]
that satisfies
\[
E^{T}M_{\theta}E=\left(\begin{array}{cccc}
\cos(\theta) & -\sin(\theta) & 0 & 0\\
\sin(\theta) & \cos(\theta) & 0 & 0\\
0 & 0 & \cos(3\theta) & -\sin(3\theta)\\
0 & 0 & \sin(3\theta) & \cos(3\theta)
\end{array}\right).
\]
It follows that if 

\[
cg_{1}+dg_{2}+\gamma\cdot g_{3}+\delta\cdot g_{4}=R_{\theta}\cdot(ag_{1}+bg_{2}+\alpha g_{3}+\beta g_{4})
\]
for
\begin{align*}
g_{1} & =\frac{1}{\sqrt{2}}\left(e_{1}\otimes_{\RR}f_{1}+e_{2}\otimes_{\RR}f_{2}\right)\\
g_{2} & =\frac{1}{\sqrt{2}}\left(e_{1}\otimes_{\RR}f_{2}-e_{2}\otimes_{\RR}f_{1}\right)\\
g_{3} & =\frac{1}{\sqrt{2}}\left(-e_{1}\otimes_{\RR}f_{1}+e_{2}\otimes_{\RR}f_{2}\right)\\
g_{4} & =\frac{1}{\sqrt{2}}\left(-e_{1}\otimes_{\RR}f_{2}-e_{2}\otimes_{\RR}f_{1}\right)
\end{align*}
then
\begin{eqnarray*}
c+id & = & \exp(i\theta)(a+ib),\\
\gamma+i\delta & = & \exp(i3\theta)(\alpha+i\beta).
\end{eqnarray*}
We conclude that the map $\Phi:\CC\oplus(\CC\otimes_{\CC}\CC\otimes_{\CC}\CC)\rightarrow\symz{\RR^{2}\otimes\RR}^{2}$
given by
\[
\Phi((a+ib)\oplus(\alpha+i\beta)1\otimes_{\CC}1\otimes_{\CC}1)=ag_{1}+bg_{2}+\alpha g_{3}+\beta g_{4}
\]
is equivariant with respect to the $\SO(2)=\U(1)$ actions in $\CC\oplus(\CC\otimes_{\CC}\CC\otimes_{\CC}\CC)$
and $\symz{\RR^{2}\otimes\RR^{2}}$, and hence it induces an isomorphism
$\Phi:F\oplus(F\otimes_{\CC}F\otimes_{\CC}F)\rightarrow\hbox{Hom}(F^{*},\symz F))$.
If we let
\[
\sigma=\Phi((a+ib)\oplus(\alpha+i\beta)1\otimes_{\CC}1\otimes_{\CC}1)
\]
then for $\xi=(\xi_{1},\xi_{2})\in(\RR^{2})^{*}$ we get
\begin{equation}
\sigma(\xi)=\left(\begin{array}{cc}
p_{1} & q_{1}\\
q_{1} & -p_{1}
\end{array}\right)\xi_{1}+\left(\begin{array}{cc}
p_{2} & q_{2}\\
q_{2} & -p_{2}
\end{array}\right)\xi_{2},\label{eq:SymbolCoordinatesFormula}
\end{equation}
where 
\[
(p_{1},q_{1},p_{2},q_{2})=\frac{1}{\sqrt{2}}(a-\alpha,b-\beta,-b-\beta,a+\alpha).
\]
If we use the above formulas we obtain
\[
(\sigma^{*}g)(\xi,\eta)=(\eta_{1},\eta_{2})G_{\sigma}\left(\begin{array}{c}
\xi_{1}\\
\xi_{2}
\end{array}\right)
\]
where $\xi=(\xi_{1},\xi_{2}),\eta=(\eta_{1},\eta_{2})$ and 
\[
G_{\sigma}=\left(\begin{array}{cc}
\frac{1}{2}\left((a-\alpha)^{2}+(b-\beta)^{2}\right) & b\alpha-a\beta\\
b\alpha-a\beta & \frac{1}{2}\left((a+\alpha)^{2}+(b+\beta)^{2}\right)
\end{array}\right).
\]
The above matrix has eigenvalues
\begin{align*}
\kappa_{1} & =\frac{1}{2}(|a+ib|-|\alpha+i\beta|)^{2},\\
\kappa_{2} & =\frac{1}{2}(|a+ib|+|\alpha+i\beta|)^{2},
\end{align*}
which gives the formulas for the eigenvalues of $G_{\sigma}$ stated
in the theorem. If we use polar coordinates so that 
\[
a+ib=r\exp(i\theta)\hbox{\,\,\,and\,\,\,}\alpha+i\beta=\text{\ensuremath{\rho\exp(i\varphi)}}
\]
 then it is easy to see that the eigen-spaces $L_{1}$ and $L_{2}$
corresponding to $\kappa_{1}$ and $\kappa_{2}$ are spanned by the
orthonormal vectors 
\begin{align*}
z_{1} & =\exp\left(i\left(\frac{\varphi-\theta}{2}\right)\right),\\
z_{2} & =i\exp\left(i\left(\frac{\varphi-\theta}{2}\right)\right).
\end{align*}
Hence
\begin{align*}
z_{1}^{2} & =\left(\frac{r}{\rho}\right)\left(\frac{\alpha+i\beta}{a+ib}\right),\\
z_{2}^{2} & =-\left(\frac{r}{\rho}\right)\left(\frac{\alpha+i\beta}{a+ib}\right),
\end{align*}
and
\begin{align*}
L_{1} & =\RR\cdot\left\{ z^{\flat}\in F_{x}^{*}\left|z^{2}=\left(\frac{\alpha+i\beta}{a+ib}\right)\right.\right\} ,\\
L_{2} & =\RR\cdot\left\{ z^{\flat}\in F_{x}^{*}\left|z^{2}=-\left(\frac{\alpha+i\beta}{a+ib}\right)\right.\right\} .
\end{align*}

\bibliographystyle{plain}
\bibliography{myBib}

\end{document}